\def\@endtheorem{\endtrivlist}
\newtheorem{theorem}{Theorem}
\newtheorem{lemma}[theorem]{Lemma}
\newtheorem{proposition}[theorem]{Proposition}
\newtheorem{corollary}[theorem]{Corollary}
\newtheorem{definition}[theorem]{Definition}
\newtheorem{example}{Example}
\renewcommand{\vec}[1]{\mathbf{#1}}
\newcommand{\ph}{\phantom{0}}
\newcommand{\imprMark}{\ensuremath{\ast}}
\newcommand{\phMark}{\phantom{\imprMark}}
\newcommand{\sep}{\unskip,\ }
\newcommand{\orcidID}[1]{
        \unskip\hspace{0.1em}\href{https://orcid.org/#1}{\raisebox{-0.12\height}{\includegraphics[height=0.8em]{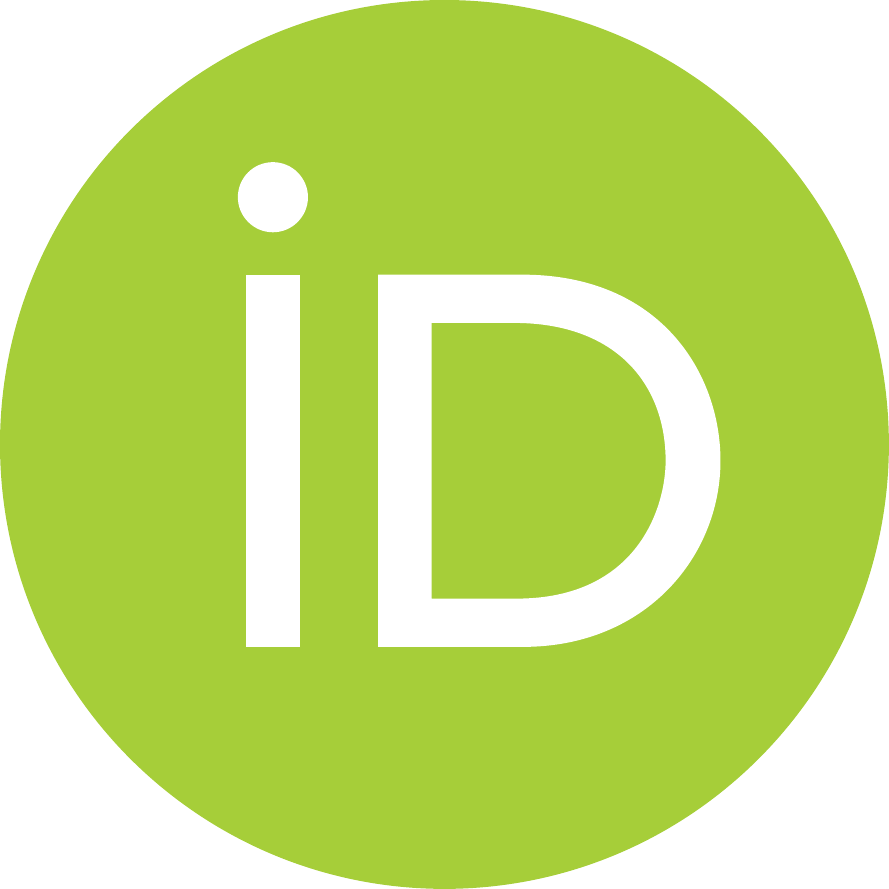}}}
    }
\begin{document}

\begin{center}
  {\Large\fontfamily{ppl}\fontseries{b}\selectfont Steane-Enlargement of Quantum Codes\\\smallskip from the Hermitian Curve}\\[1.5em]
  {René Bødker Christensen\orcidID{0000-0002-9209-3739}and Olav Geil\orcidID{0000-0002-9666-3399}}
  
  \bigskip{Department of Mathematical Sciences,
      Aalborg University, Denmark.}
  
  {\texttt{\{rene,olav\}@math.aau.dk}}

  \vglue 2em
  {\small\textbf{Abstract}}\\[0.5em]
  \begin{minipage}{0.9\linewidth}
    \rule{\linewidth}{0.5pt}
    \footnotesize In this paper, we study the construction of quantum codes by applying Steane-enlargement to codes from the Hermitian curve. We cover Steane-enlargement of both usual one-point Hermitian codes and of order bound improved Hermitian codes. In particular, the paper contains two constructions of quantum codes whose parameters are described by explicit formulae, and we show that these codes compare favourably to existing, comparable constructions in the literature.

    \bigskip\emph{Keywords:} Algebraic geometric code \sep Quantum code \sep Steane-enlargement \sep Hermitian curve

    \medskip\emph{2000 MSC:} 94B27 \sep 81Q99\\[-5pt]
    \rule{\linewidth}{0.5pt}
  \end{minipage}
\end{center}

\section{Introduction}
The prospect of quantum computers potentially surpassing the computational abilities of classical computers has spawned much interest in studying and building large-scale quantum computers. Since such quantum systems would be very susceptible to disturbances from the environment and to imperfections in the quantum gates acting on the system, the implementation of a working quantum computer requires some form of error-correction. This has led to the study of quantum error-correcting codes, and although such codes are conceptually similar to their classical brethren, their construction call for different techniques. Nevertheless, results have been found that link classical codes to quantum ones, suggesting that good quantum codes may be found by considering good classical codes.

A well-known class of algebraic geometric codes is the one-point codes from the Hermitian curve.
For these one-point Hermitian codes, one of the simplest bounds on the minimal distance is the Goppa bound. For codes of sufficiently large dimension, however, the Goppa bound does not give the true minimal distance, and the order bound for dual codes \cite{handbook,DP10} and for primary codes \cite{GMRT11,AG08,NormTrace} give more information on the minimal distance of the codes. These improved bounds also give rise to a family of improved codes with designed minimal distances, and we shall refer to such codes as \emph{order bound improved} codes.

The construction of quantum codes from one-point Hermitian codes has already been considered in \cite{SK06}, and from order bound improved Hermitian codes in \cite{qcHermit}. Neither of these works, however, explore the potential benefit from applying Steane-enlargement to the codes under consideration. Thus, this paper will address this question, and describe the quantum codes that can be obtained in this manner.

The work is structured as follows. Section~\ref{sec:preliminaries} contains the preliminary theory on quantum codes and order bound improved Hermitian codes that will be necessary in the subsequent sections. Afterwards, Section~\ref{sec:steaneHermitian} covers the results of applying Steane-enlargement to one-point Hermitian codes and order bound improved Hermitian codes. The parameters of the resulting codes are then compared to those of codes already in the literature in Section~\ref{sec:comp-with-exist}.
\section{Preliminaries}\label{sec:preliminaries}
In this section, we shall reiterate the necessary definitions and results regarding both quantum error-correcting codes and order bound improved Hermitian codes. For both of these, we will be relying on nested pairs of classical codes, and on the relative distance of such pairs. Thus, recall that for classical, linear codes $\mathcal{C}_2\subsetneq\mathcal{C}_1$, we define the relative distance of the pair as
\begin{equation*}
  d(\mathcal{C}_1,\mathcal{C}_2)=\min\{w_H(\vec{c})\mid\vec{c}\in\mathcal{C}_1\setminus\mathcal{C}_2\},
\end{equation*}
where $w_H$ denotes the Hamming weight.

\subsection{Quantum codes}
A $k$-dimensional quantum code of length $n$ over $\mathbb{F}_q$ is a $q^k$-dimensional subspace of the Hilbert space $\mathbb{C}^{q^n}$. This space is subject to phase-shift errors, bit-flip errors, and combinations thereof. For a quantum code, we define its two minimal distances $d_z$ and $d_x$ as the maximal integers such that the code allows simultaneous detection of any $d_z-1$ phase-shift errors and any $d_x-1$ bit-flip errors. When such a code has length $n$ and dimension $k$, we refer to it as an $[[n,k,d_z/d_x]]_q$-quantum code.

The literature contains many works based on the assumption that it is not necessary to distinguish between the two types of errors. Thus, the quantum code is only associated with a single minimal distance. That is, we say that its minimal distance is $d=\min\{d_z,d_x\}$, and the notation for the parameters is presented slightly more compactly as $[[n,k,d]]_q$. In this case, we refer to the quantum code as being \emph{symmetric}, and in the previous case we refer to it as being \emph{asymmetric}.

One of the commonly used constructions of quantum codes was provided by Calderbank, Shore, and Steane \cite{CalderbankShor,Steane96}, and relies on a self-orthogonal, classical error-correcting code in order to obtain a quantum stabilizer code. It was later shown that the self-orthogonal code can be replaced by a pair of nested codes, giving asymmetric quantum codes. This generalized CSS-construction is captured in the following theorem found in \cite{SKR09}.
\begin{theorem}\label{thm:css}
  Given $\mathbb{F}_q$-linear codes $\mathcal{C}_2\subsetneq\mathcal{C}_1$ of length $n$ and codimension $\ell$, the CSS-construc\-tion ensures the existence of an asymmetric quantum code with parameters
  \begin{equation*}
    [[n,\ell,d_z/d_x]]_q
  \end{equation*}
  where $d_z=d(\mathcal{C}_1,\mathcal{C}_2)$ and $d_x=d(\mathcal{C}_2^\perp,\mathcal{C}_1^\perp)$.
\end{theorem}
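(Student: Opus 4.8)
The plan is to exhibit the promised code as an $\mathbb{F}_q$-linear stabiliser code built from $\mathcal{C}_1$ and $\mathcal{C}_2$, and to read off $n$, $\ell$, $d_z$ and $d_x$ from standard stabiliser-code bookkeeping. Recall the symplectic picture: Pauli operators on $\mathbb{C}^{q^n}$ correspond, up to scalars, to vectors $(\mathbf{a}\mid\mathbf{b})\in\mathbb{F}_q^{2n}$, with $\mathbf{a}$ the bit-flip part and $\mathbf{b}$ the phase part; a pure bit-flip has the form $(\mathbf{c}\mid\mathbf{0})$ and weight $w_H(\mathbf{c})$, a pure phase has the form $(\mathbf{0}\mid\mathbf{d})$ and weight $w_H(\mathbf{d})$; two operators commute iff $\operatorname{tr}(\mathbf{a}\cdot\mathbf{b}'-\mathbf{a}'\cdot\mathbf{b})=0$. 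Any $\mathbb{F}_q$-linear $S\subseteq\mathbb{F}_q^{2n}$ that is self-orthogonal for this form with $\dim_{\mathbb{F}_q}S=n-k$ defines an $[[n,k]]_q$ code, and a Pauli error $E$ with representative $\mathbf{v}_E$ is detected by the code unless $\mathbf{v}_E\in S^{\perp_s}\setminus S$, where $S^{\perp_s}$ is the trace-symplectic dual. I would apply this to $S=\mathcal{C}_1^\perp\times\mathcal{C}_2$.

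First I would verify self-orthogonality of $S$: for $(\mathbf{a}\mid\mathbf{b}),(\mathbf{a}'\mid\mathbf{b}')\in S$ we have $\mathbf{a},\mathbf{a}'\in\mathcal{C}_1^\perp$ and $\mathbf{b},\mathbf{b}'\in\mathcal{C}_2\subseteq\mathcal{C}_1$, so $\mathbf{a}\cdot\mathbf{b}'=\mathbf{a}'\cdot\mathbf{b}=0$ — this is the one place the hypothesis $\mathcal{C}_2\subseteq\mathcal{C}_1$ enters. Since $\dim_{\mathbb{F}_q}S=(n-\dim\mathcal{C}_1)+\dim\mathcal{C}_2=n-\ell$, the stabiliser code already has parameters $[[n,\ell]]_q$, settling the length and dimension. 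Next I would compute the trace-symplectic dual; using that $\mathcal{C}_1,\mathcal{C}_2$ are $\mathbb{F}_q$-linear (so that $\operatorname{tr}(\mathbf{c}\cdot\mathbf{b})=0$ for all $\mathbf{b}$ in an $\mathbb{F}_q$-linear code forces $\mathbf{c}$ into its Euclidean dual), one obtains $S^{\perp_s}=\mathcal{C}_2^\perp\times\mathcal{C}_1$.

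The two distances then drop out. A pure phase error $(\mathbf{0}\mid\mathbf{d})$ lies in $S^{\perp_s}\setminus S$ exactly when $\mathbf{d}\in\mathcal{C}_1\setminus\mathcal{C}_2$, so every phase error of weight at most $d(\mathcal{C}_1,\mathcal{C}_2)-1$ is detected (if $\mathbf{d}\notin\mathcal{C}_1$ it anticommutes with a stabiliser, and if $\mathbf{d}\in\mathcal{C}_1$ then its small weight forces $\mathbf{d}\in\mathcal{C}_2$, i.e.\ it acts trivially), and this is sharp because a minimal-weight vector of $\mathcal{C}_1\setminus\mathcal{C}_2$ gives an undetected phase error; hence $d_z=d(\mathcal{C}_1,\mathcal{C}_2)$. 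Dually, a pure bit-flip $(\mathbf{c}\mid\mathbf{0})$ lies in $S^{\perp_s}\setminus S$ exactly when $\mathbf{c}\in\mathcal{C}_2^\perp\setminus\mathcal{C}_1^\perp$, whence $d_x=d(\mathcal{C}_2^\perp,\mathcal{C}_1^\perp)$, and collecting these facts gives the $[[n,\ell,d_z/d_x]]_q$ code. (An alternative, formalism-free route is to work with the coset states $|\mathbf{x}+\mathcal{C}_1^\perp\rangle=\frac{1}{\sqrt{|\mathcal{C}_1^\perp|}}\sum_{\mathbf{v}\in\mathcal{C}_1^\perp}|\mathbf{x}+\mathbf{v}\rangle$ indexed by $\mathbf{x}\in\mathcal{C}_2^\perp/\mathcal{C}_1^\perp$; a short additive-character computation shows that a bit-flip $X_{\mathbf{e}}$ preserves their span iff $\mathbf{e}\in\mathcal{C}_2^\perp$ and acts as the identity on it iff $\mathbf{e}\in\mathcal{C}_1^\perp$, with the Fourier-dual statement for phase errors.) The only genuinely delicate point is the asymmetric bookkeeping: one must check that bit-flip and phase errors can be treated separately, so that the detection guarantee for each type depends only on its own relative distance, and — over a non-prime field — that the passage between $\mathbb{F}_q$-linearity and the trace pairing is handled correctly. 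Both are routine, and the resulting argument is essentially the one in \cite{SKR09}.
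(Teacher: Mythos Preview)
The paper does not actually prove this theorem; it is stated as a result ``found in \cite{SKR09}'' and used as a black box. Your argument is a correct and standard proof of the asymmetric CSS construction via the stabiliser formalism (with the coset-state variant you mention being the other usual route), so there is no discrepancy to report.
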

\begin{corollary}\label{cor:steaneSelfOrth}
  If the $[n,k,d]$ linear code $\mathcal{C}\subseteq\mathbb{F}_q^n$ is self-orthogonal, then a
  \begin{equation*}
    [[n,2k-n,d]]_q
  \end{equation*}
  symmetric quantum code exists.
\end{corollary}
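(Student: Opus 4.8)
The plan is to deduce the corollary directly from Theorem~\ref{thm:css} by taking the nested pair to be $\mathcal{C}$ together with its Euclidean dual. Concretely I would put $\mathcal{C}_1=\mathcal{C}$ and $\mathcal{C}_2=\mathcal{C}^\perp$. Self-orthogonality of $\mathcal{C}$ supplies the inclusion $\mathcal{C}^\perp\subseteq\mathcal{C}$, so this is a legitimate nested pair provided the inclusion is strict; the boundary case $\mathcal{C}=\mathcal{C}^\perp$ forces $2k-n=0$, and an $[[n,0,d]]_q$ code exists trivially, so I may assume $\mathcal{C}^\perp\subsetneq\mathcal{C}$ and apply Theorem~\ref{thm:css}. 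The codimension of the pair is $\ell=\dim_{\mathbb{F}_q}\mathcal{C}-\dim_{\mathbb{F}_q}\mathcal{C}^\perp=k-(n-k)=2k-n$, which is precisely the claimed quantum dimension.

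It then remains to track the two relative distances produced by Theorem~\ref{thm:css}. The phase-shift distance is $d_z=d(\mathcal{C}_1,\mathcal{C}_2)=d(\mathcal{C},\mathcal{C}^\perp)$ by definition. For the bit-flip distance, $d_x=d(\mathcal{C}_2^\perp,\mathcal{C}_1^\perp)=d\bigl((\mathcal{C}^\perp)^\perp,\mathcal{C}^\perp\bigr)$, and the identity $(\mathcal{C}^\perp)^\perp=\mathcal{C}$ turns this into $d(\mathcal{C},\mathcal{C}^\perp)$ as well; hence $d_z=d_x$ and the resulting code is symmetric. Finally, since $\vec{0}\in\mathcal{C}^\perp$, any vector of $\mathcal{C}\setminus\mathcal{C}^\perp$ is a nonzero codeword of $\mathcal{C}$, so
\[
  d(\mathcal{C},\mathcal{C}^\perp)=\min\{w_H(\vec{c})\mid\vec{c}\in\mathcal{C}\setminus\mathcal{C}^\perp\}\geq\min\{w_H(\vec{c})\mid\vec{c}\in\mathcal{C}\setminus\{\vec{0}\}\}=d .
\]
Thus Theorem~\ref{thm:css} yields an $[[n,2k-n,d(\mathcal{C},\mathcal{C}^\perp)]]_q$ symmetric quantum code, and since a code detecting any $d(\mathcal{C},\mathcal{C}^\perp)-1$ errors in particular detects any $d-1$ errors, this is in particular an $[[n,2k-n,d]]_q$ code, as asserted.

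No deep difficulty arises here, as the statement is genuinely a specialisation of Theorem~\ref{thm:css}; the only steps that need care are the small pieces of bookkeeping. The main such point is verifying that $d_x$ collapses onto $d_z$ via $(\mathcal{C}^\perp)^\perp=\mathcal{C}$, so that one really does obtain a \emph{symmetric} code, together with the accompanying observation that this common relative distance $d(\mathcal{C},\mathcal{C}^\perp)$ dominates the ordinary minimum distance $d$ of $\mathcal{C}$, which is what licenses recording the quantum distance as $d$. One must also not overlook the self-dual boundary case, since Theorem~\ref{thm:css} presupposes a proper inclusion $\mathcal{C}_2\subsetneq\mathcal{C}_1$.
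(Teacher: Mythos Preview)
Your argument is correct and is exactly the intended specialisation of Theorem~\ref{thm:css}; the paper states the corollary without proof, and the derivation you give---taking $\mathcal{C}_1=\mathcal{C}$, $\mathcal{C}_2=\mathcal{C}^\perp$, computing the codimension as $2k-n$, and using $(\mathcal{C}^\perp)^\perp=\mathcal{C}$ to collapse $d_x$ onto $d_z$---is precisely the standard justification. Your handling of the self-dual boundary case and of the inequality $d(\mathcal{C},\mathcal{C}^\perp)\geq d$ is also appropriate.
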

When the CSS-construction is applied to a self-orthogonal binary linear code as in Corollary~\ref{cor:steaneSelfOrth}, Steane \cite{Steane99} proposed a procedure whereby the dimension of the resulting quantum code may be increased. In the best case, this can be done with little or no decrease in the minimal distance of the quantum code. This procedure -- eponymously named Steane-enlargement in the literature -- has later been generalized to $q$-ary codes as well \cite{Hamada,qArySteane}. 
\begin{theorem}\label{thm:qArySteane}
  Consider a linear $[n,k]$ code $\mathcal{C}\subseteq\mathbb{F}_q^n$ that contains its Euclidean dual $\mathcal{C}^\perp$. If $\mathcal{C}'$ is an $[n,k']$ code such that $\mathcal{C}\subsetneq\mathcal{C}'$ and $k'\geq k+2$, then an
  \begin{equation*}
    \Big[\Big[n,k+k'-n,\geq \min\big\{d,\big\lceil(1+\textstyle\frac{1}{q})d'\big\rceil\big\}\Big]\Big]_q
  \end{equation*}
  quantum code exists with $d=d(\mathcal{C},\mathcal{C}'^\perp)$ and $d'=d(\mathcal{C}',\mathcal{C}'^\perp)$.
\end{theorem}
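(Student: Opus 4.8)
The plan is to work in the symplectic (stabiliser) model. I will construct an $\mathbb{F}_q$-linear code $S\subseteq\mathbb{F}_q^{2n}$, self-orthogonal for the symplectic form $\big\langle(\vec{a}\mid\vec{b}),(\vec{a}'\mid\vec{b}')\big\rangle=\vec{a}\cdot\vec{b}'-\vec{a}'\cdot\vec{b}$, with $\dim S=2n-k-k'$, and such that every $\vec{v}\in S^{\perp_s}\setminus S$ has symplectic weight at least $\min\{d,\lceil(1+\tfrac1q)d'\rceil\}$; here the symplectic weight of $(\vec{a}\mid\vec{b})$ counts the indices $i$ with $(a_i,b_i)\neq(0,0)$ and $S^{\perp_s}$ denotes the symplectic dual. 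By the standard stabiliser-code construction over $\mathbb{F}_q$ this yields an $[[n,n-\dim S,\ge\min\{d,\lceil(1+\tfrac1q)d'\rceil\}]]_q$ quantum code, and $n-\dim S=k+k'-n$.

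For the construction, note that $\mathcal{C}^\perp\subseteq\mathcal{C}$ forces $\mathcal{C}^\perp$ to be self-orthogonal and that $\mathcal{C}'^\perp\subseteq\mathcal{C}^\perp\subseteq\mathcal{C}\subseteq\mathcal{C}'$. Put $r=k'-k\ge 2$ and pick $\vec{h}_1,\dots,\vec{h}_r\in\mathcal{C}^\perp$ whose classes form a basis of $\mathcal{C}^\perp/\mathcal{C}'^\perp$. The crucial ingredient is a matrix $D\in\mathrm{GL}_r(\mathbb{F}_q)$ having \emph{no eigenvalue in $\mathbb{F}_q$} — equivalently, $aI+bD$ invertible for all $(a,b)\neq(0,0)$ — which exists precisely because $r\ge 2$ (take, e.g., the companion matrix of an irreducible polynomial of degree $r$ over $\mathbb{F}_q$); this is where the hypothesis $k'\ge k+2$ is used. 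Setting $\vec{g}_i=\sum_j D_{ij}\vec{h}_j$, I let
\begin{equation*}
  S=(\mathcal{C}'^\perp\times\mathcal{C}'^\perp)+\Span_{\mathbb{F}_q}\{(\vec{h}_i\mid\vec{g}_i):1\le i\le r\}.
\end{equation*}
Checking that all listed generators are pairwise symplectically orthogonal is a short computation using that $\mathcal{C}^\perp$ is self-orthogonal (so $\vec{h}_i\cdot\vec{h}_j=\vec{h}_i\cdot\vec{g}_j=0$) and that $\mathcal{C}'^\perp\subseteq\mathcal{C}^\perp$ is orthogonal to $\mathcal{C}$; since the $\vec{h}_i$ are independent modulo $\mathcal{C}'^\perp$ one gets $\dim S=2(n-k')+r=2n-k-k'$.

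Next I would determine $S^{\perp_s}$ and estimate symplectic weights on $S^{\perp_s}\setminus S$. Writing $\Phi(\vec{v})=(\vec{v}\cdot\vec{h}_1,\dots,\vec{v}\cdot\vec{h}_r)$, one finds that $\Phi$ induces an isomorphism $\mathcal{C}'/\mathcal{C}\cong\mathbb{F}_q^r$ and that $S^{\perp_s}=\{(\vec{x}\mid\vec{y}):\vec{x},\vec{y}\in\mathcal{C}',\ \Phi(\vec{y})=D\Phi(\vec{x})\}$. Split according to whether $\vec{x}\in\mathcal{C}$. If $\vec{x}\in\mathcal{C}$, then $\Phi(\vec{x})=0$, hence $\Phi(\vec{y})=0$ and $\vec{y}\in\mathcal{C}$, and a short bookkeeping argument shows that a pair of this type lying outside $S$ always has one of $\vec{x},\vec{y}$ in $\mathcal{C}\setminus\mathcal{C}'^\perp$, so its symplectic weight is at least $d(\mathcal{C},\mathcal{C}'^\perp)=d$. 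If $\vec{x}\notin\mathcal{C}$, then $\Phi(\vec{x})\neq0$; since $a\vec{x}+b\vec{y}\in\mathcal{C}$ is equivalent to $(aI+bD)\Phi(\vec{x})=0$, invertibility of $aI+bD$ shows that the plane $\langle\vec{x},\vec{y}\rangle$ meets $\mathcal{C}$ only in $\vec{0}$; hence $\vec{y}$ and every $\vec{x}-\mu\vec{y}$ ($\mu\in\mathbb{F}_q$) lie in $\mathcal{C}'\setminus\mathcal{C}\subseteq\mathcal{C}'\setminus\mathcal{C}'^\perp$, and so have Hamming weight at least $d'$. Feeding this into the elementary coordinate-counting identity
\begin{equation*}
  q\cdot w_s(\vec{x}\mid\vec{y})=w_H(\vec{y})+\sum_{\mu\in\mathbb{F}_q}w_H(\vec{x}-\mu\vec{y})
\end{equation*}
gives $w_s(\vec{x}\mid\vec{y})\ge\tfrac1q(q+1)d'$, hence at least $\lceil(1+\tfrac1q)d'\rceil$ since the left-hand side is an integer. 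The two cases together give the asserted distance bound.

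The main obstacle is the second case, and specifically the choice of $D$. A naive diagonal pairing (each $\vec{g}_i$ proportional to $\vec{h}_i$) fails, because it admits logical vectors of the form $(\lambda\vec{y}\mid\vec{y})$ with $\vec{y}\in\mathcal{C}'\setminus\mathcal{C}$, whose weight is only controlled by $d'$ rather than by $(1+\tfrac1q)d'$; demanding that $D$ have no eigenvalue in $\mathbb{F}_q$ is exactly what excludes such degeneracies, and this is impossible when $r=1$, which explains the requirement $k'\ge k+2$. The remaining steps — symplectic self-orthogonality of $S$, the description of $S^{\perp_s}$ through $\Phi$, and the bookkeeping in the first case — are routine.
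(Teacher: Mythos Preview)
The paper does not prove this theorem; it is stated in Section~\ref{sec:preliminaries} as a preliminary result with citations to \cite{Hamada,qArySteane}, and no proof is supplied. So there is nothing in the paper's own argument to compare your proposal against.

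That said, your sketch is correct and is essentially the standard proof (as in \cite{qArySteane}). The verification of symplectic self-orthogonality, the dimension count $\dim S=2n-k-k'$, the identification of $S^{\perp_s}$ via $\Phi$, and the two-case weight analysis are all sound. In particular, the coordinate identity $q\cdot w_s(\vec{x}\mid\vec{y})=w_H(\vec{y})+\sum_{\mu\in\mathbb{F}_q}w_H(\vec{x}-\mu\vec{y})$ holds exactly (checked coordinatewise), and in the second case the eigenvalue-free condition on $D$ indeed forces $\vec{x}$ and $\vec{y}$ to be linearly independent with $\Span\{\vec{x},\vec{y}\}\cap\mathcal{C}=\{\vec{0}\}$, so all $q+1$ vectors on the right-hand side lie in $\mathcal{C}'\setminus\mathcal{C}'^\perp$. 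Your remark that $r\ge 2$ is precisely what guarantees the existence of such a $D$ (e.g.\ the companion matrix of a degree-$r$ irreducible polynomial) correctly isolates where the hypothesis $k'\ge k+2$ enters.

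One small point worth making explicit in the first case: you need $w_s(\vec{x}\mid\vec{y})\ge\max\{w_H(\vec{x}),w_H(\vec{y})\}$, which is immediate from the definition of symplectic weight, before concluding $w_s\ge d$.
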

When presenting the parameters of a Steane-enlarged code in propositions of this paper, we will often state the dimension in the form $2k-n+(k'-k)$. In this way, we highlight the dimension increase since $2k-n$ is the dimension of the non-enlarged quantum code.

\subsection{Order bound improved Hermitian codes}
\label{sec:order-bound-improved}
Let $P_1,P_2,\ldots,P_{q^3},Q$ be the rational places of the Hermitian function field over $\mathbb{F}_{q^2}$, and define the divisor $D=P_1+P_2+\cdots+P_{q^3}$. Further, denote by $H(Q)$ the Weierstraß semigroup of $Q$. As in \cite{qcHermit,GMRT11}, we consider a special subset of $H(Q)$, namely
\begin{equation*}
  H^\ast(Q)=\{\lambda\in H(Q)\mid C_{\mathcal{L}}(D,\lambda Q)\neq C_{\mathcal{L}}(D,(\lambda-1)Q)\}.
\end{equation*}
It may be shown that in fact
\begin{equation*}
  H^\ast(Q)=\{iq+j(q+1)\mid 0\leq i<q^2,\; 0\leq j<q\}.
\end{equation*}
Now, fix an element $f_\lambda\in\mathcal{L}(\lambda Q)\setminus\mathcal{L}((\lambda-1)Q)$ for each $\lambda\in H^\ast(Q)$, and define the map $\sigma\colon H^\ast(Q)\rightarrow\mathbb{N}$ given by
\begin{equation*}
  \sigma(iq+j(q+1))=\begin{cases}
    q^3-iq-j(q+1) &\text{if } 0\leq i<q^2-q\\
    (q^2-i)(q-j) &\text{if } q^2-q\leq i<q^2
  \end{cases}.
\end{equation*}
With this map, it was shown in~\cite{qcHermit} as a special case of \cite{NormTrace} that the improved primary code
\begin{equation*}
  \tilde{E}(\delta)=\{(f_\lambda(P_1),f_\lambda(P_2),\ldots,f_\lambda(P_n))\mid\sigma(\lambda)\geq\delta\}
\end{equation*}
has minimal distance exactly $\delta$ whenever $\delta\in \sigma(H^\ast(Q))$. To obtain an improved dual code, let $\mu\colon H^\ast(Q)\rightarrow\mathbb{N}$ be given by $\mu(iq+j(q+1))=\sigma((q^2-1-i)q+(q-1-j)(q+1))$.
Then, as explained in \cite{qcHermit}, the code
\begin{equation*}
  \tilde{C}(\delta)=\big(\{(f_\lambda(P_1),f_\lambda(P_2),\ldots,f_\lambda(P_n))\mid\mu(\lambda)<\delta\}\big)^\perp
\end{equation*}
also has minimal distance exactly $\delta$ whenever $\delta\in\mu(H^\ast(Q))=\sigma(H^\ast(Q))$, and in fact $\tilde{E}(\delta)=\tilde{C}(\delta)$.

For the order bound to produce an improved code, the designed distance must be sufficiently small. Otherwise, the code $\tilde{E}(\delta)$ simply corresponds to one of the usual one-point Hermitian codes. This correspondence is given in the following result from \cite[Cor.\,4]{qcHermit}.
\begin{lemma}\label{lem:improvedEqualUsual}
  For $\delta>q^2-q$ we have $\tilde{E}(\delta)=C_{\mathcal{L}}(D,(q^3-\delta)Q)$, but the code $C_{\mathcal{L}}(D,(q^3-(q^2-q))Q)$ is strictly contained in $\tilde{E}(q^2-q)$.
\end{lemma}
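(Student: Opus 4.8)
The plan is to translate both assertions into statements about the index sets of natural bases, and then to run an elementary case analysis on the two branches of $\sigma$. Recall that for $m<q^{3}=n$ the evaluation map is injective on $\mathcal{L}(mQ)$, so $\{\ev(f_{\lambda})\mid\lambda\in H^{\ast}(Q),\ \lambda\le m\}$ is a basis of $C_{\mathcal{L}}(D,mQ)$, while $\tilde{E}(\delta)$ is by definition spanned by the subfamily $\{\ev(f_{\lambda})\mid\lambda\in H^{\ast}(Q),\ \sigma(\lambda)\ge\delta\}$ of the linearly independent family $\{\ev(f_{\lambda})\mid\lambda\in H^{\ast}(Q)\}$. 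Since $q^{3}-\delta<q^{3}$ throughout, the first claim is therefore equivalent to the set equality
\begin{equation*}
  \{\lambda\in H^{\ast}(Q)\mid\sigma(\lambda)\ge\delta\}=\{\lambda\in H^{\ast}(Q)\mid\lambda\le q^{3}-\delta\},
\end{equation*}
and the second amounts to a proper inclusion between the corresponding index sets when $\delta=q^{2}-q$.

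For the set equality I would fix $\lambda=iq+j(q+1)\in H^{\ast}(Q)$ and split on the definition of $\sigma$. If $0\le i<q^{2}-q$, then $\sigma(\lambda)=q^{3}-iq-j(q+1)=q^{3}-\lambda$, so $\sigma(\lambda)\ge\delta\iff\lambda\le q^{3}-\delta$ and $\lambda$ lies in one index set precisely when it lies in the other. If $q^{2}-q\le i<q^{2}$, put $a=q^{2}-i$ and $b=q-j$, both in $\{1,\dots,q\}$; then $\sigma(\lambda)=ab$, while a short computation gives $q^{3}-\lambda=(a+b-1)q+b-q^{2}$. For $(a,b)=(q,q)$, i.e.\ $\lambda=q^{3}-q^{2}$, both quantities equal $q^{2}$ and we are again in the situation $\sigma(\lambda)\ge\delta\iff\lambda\le q^{3}-\delta$. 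For $(a,b)\ne(q,q)$ we have $\min\{a,b\}\le q-1$ and $a+b\le 2q-1$, hence both $\sigma(\lambda)=ab\le q(q-1)=q^{2}-q$ and $q^{3}-\lambda\le(2q-2)q+q-q^{2}=q^{2}-q$; since $\delta>q^{2}-q$, such a $\lambda$ then lies in neither index set. This yields the set equality, and hence $\tilde{E}(\delta)=C_{\mathcal{L}}(D,(q^{3}-\delta)Q)$.

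For the remaining assertion put $m=q^{3}-(q^{2}-q)=q^{3}-q^{2}+q$. To obtain $C_{\mathcal{L}}(D,mQ)\subseteq\tilde{E}(q^{2}-q)$ I would check that $\sigma(\lambda)\ge q^{2}-q$ for every $\lambda\in H^{\ast}(Q)$ with $\lambda\le m$: on the first branch of $\sigma$, and for $\lambda=q^{3}-q^{2}$, this is immediate from $\sigma(\lambda)=q^{3}-\lambda\ge q^{3}-m=q^{2}-q$, and the only remaining $\lambda\le m$ on the second branch is $\lambda=m$ itself, with $(i,j)=(q^{2}-q+1,0)$ and $\sigma(m)=(q-1)q=q^{2}-q$. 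For strictness I would exhibit $\lambda=q^{3}-q^{2}+q+1$, which corresponds to $(i,j)=(q^{2}-q,1)$, lies in $H^{\ast}(Q)$ (this is where the hypothesis $q\ge2$ is used), and satisfies $\sigma(\lambda)=q(q-1)=q^{2}-q$, so that $\ev(f_{\lambda})\in\tilde{E}(q^{2}-q)$; since $\lambda>m$, the vector $\ev(f_{\lambda})$ does not lie in $C_{\mathcal{L}}(D,(\lambda-1)Q)\supseteq C_{\mathcal{L}}(D,mQ)$, which makes the inclusion proper.

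The only genuine computation is the bookkeeping on the second branch of $\sigma$ — the identity $q^{3}-\lambda=(a+b-1)q+b-q^{2}$ together with the inequalities $ab\le q^{2}-q$ and $q^{3}-\lambda\le q^{2}-q$ for $(a,b)\ne(q,q)$ — and I do not anticipate it presenting any real difficulty.
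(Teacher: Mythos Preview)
Your argument is correct. The reduction to index sets is justified: by the very definition of $H^{\ast}(Q)$ the vectors $\ev(f_{\lambda})$, $\lambda\in H^{\ast}(Q)$, form a basis of $\mathbb{F}_{q^{2}}^{q^{3}}$, and for $m<q^{3}$ the subfamily with $\lambda\le m$ is a basis of $C_{\mathcal{L}}(D,mQ)$. Your case analysis on the two branches of $\sigma$ is clean, and the key observation---that on the second branch with $(a,b)\ne(q,q)$ one has simultaneously $\sigma(\lambda)=ab\le q(q-1)$ and $q^{3}-\lambda=(a+b-1)q+b-q^{2}\le q^{2}-q$---is exactly what is needed. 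The exhibition of the extra basis vector at $\lambda=q^{3}-q^{2}+q+1$ for strictness is also fine; note that $\lambda\in H^{\ast}(Q)$ already guarantees $\ev(f_{\lambda})\notin C_{\mathcal{L}}(D,(\lambda-1)Q)$, so the injectivity remark is not strictly needed there.

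As for comparison with the paper: the paper does not prove Lemma~\ref{lem:improvedEqualUsual} at all. It is quoted verbatim as \emph{Corollary~4} of the companion paper~\cite{qcHermit} and used as a black box. Your proposal therefore supplies a complete, self-contained argument where the present paper gives only a citation; the approach you take---comparing the index sets via the explicit two-branch description of $\sigma$---is the natural one and matches in spirit how such statements are established in~\cite{qcHermit}.
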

For $\delta\leq q^2-q$, the work \cite{qcHermit} contains a lower bound on the dimension of $\tilde{E}(\delta)$. In Proposition~\ref{prop:hermitDimension} below, we give an explicit formula describing the dimension in this case. This formula relies on the number of (number theoretic) divisors of a certain type, as specified in the following definition.
\begin{definition}
  For $n\in\mathbb{Z}_+$, we let $\tau^{(q)}(n)$ denote the number of divisors $d$ of $n$ such that $0\leq d\leq q$ and $n/d\leq q$.
\end{definition}
From the definition it should be clear that $\tau^{(q)}(n)$ can be computed in $\mathcal{O}(q)$ operations.

\begin{proposition}\label{prop:hermitDimension}
  Let $1\leq\delta\leq q^2$, and write $\delta-1=aq+b$. Then
  \begin{equation*}
    \dim(\tilde{E}(\delta))= 
    q^3-q^2 
    - \frac{a(a-1)}{2} - \min\{a,b\}
    + \sum_{i=\delta}^{q^2}\tau^{(q)}(i).
  \end{equation*}
\end{proposition}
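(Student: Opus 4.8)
The plan is to count the elements $\lambda \in H^\ast(Q)$ with $\sigma(\lambda) \geq \delta$, since $\dim(\tilde E(\delta))$ equals the number of such $\lambda$. Recall $H^\ast(Q) = \{iq + j(q+1) \mid 0 \leq i < q^2,\ 0 \leq j < q\}$ has exactly $q^3$ elements (the map $(i,j)\mapsto iq+j(q+1)$ is a bijection onto $H^\ast(Q)$, which should be noted first). I would split the index set into the two regimes appearing in the definition of $\sigma$: the ``large'' region $R_1 = \{(i,j) \mid 0 \leq i < q^2-q\}$ where $\sigma(iq+j(q+1)) = q^3 - iq - j(q+1) = q^3 - (iq + j(q+1))$, and the ``small'' region $R_2 = \{(i,j) \mid q^2-q \leq i < q^2\}$ where $\sigma$ takes the product form $(q^2-i)(q-j)$. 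The strategy is to compute $\dim(\tilde E(\delta))$ by starting from $q^3$ (all of $H^\ast(Q)$) and subtracting the number of $\lambda$ with $\sigma(\lambda) < \delta$, handling the contributions from $R_1$ and $R_2$ separately.

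For the region $R_1$: here $\sigma(\lambda) \geq \delta$ iff $iq + j(q+1) \leq q^3 - \delta$. Writing $q^3 - \delta$ and comparing with $\delta - 1 = aq + b$ (so $0 \leq b < q$, $0 \leq a \leq q^2$), I would count lattice points $(i,j)$ with $0 \leq i < q^2-q$, $0 \leq j < q$ and $iq + j(q+1) \leq q^3-\delta$. The cleaner route is to count the complementary set, i.e.\ those with $iq + j(q+1) > q^3 - \delta$, equivalently $iq + j(q+1) \geq q^3 - \delta + 1 = q^3 - aq - b$. Rewriting $iq + j(q+1) = (i+j)q + j$ and setting $i' = q^2 - 1 - i$, $j' = q - 1 - j$ so that $(i,j)$ ranges appropriately, one gets that the number of excluded points in $R_1$ corresponds to pairs with $i'q + j'(q+1)$ small; a direct count gives a triangular-number contribution $\tfrac{a(a-1)}{2}$ together with the correction term $\min\{a,b\}$ coming from the slight skew introduced by the ``$+j$'' in $(i+j)q + j$. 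This is the combinatorial heart of the argument: one must carefully track, for each value of $i+j$, how many $j$ in range satisfy the inequality, and the boundary cases $i = q^2-q$ (the interface with $R_2$) must be reconciled.

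For the region $R_2$: here $\lambda = iq + j(q+1)$ with $q^2 - q \leq i < q^2$, and $\sigma(\lambda) = (q^2-i)(q-j)$. Putting $s = q^2 - i \in \{1,\dots,q\}$ and $t = q - j \in \{1,\dots,q\}$, we need $st \geq \delta$. The number of pairs $(s,t) \in \{1,\dots,q\}^2$ with $st \geq \delta$ equals $q^2$ minus the number with $st \leq \delta - 1$, and $\#\{(s,t) : 1 \le s,t \le q,\ st = m\} = \tau^{(q)}(m)$ by definition of $\tau^{(q)}$. Hence the $R_2$ contribution to $\dim(\tilde E(\delta))$ is $q^2 - \sum_{m=1}^{\delta-1}\tau^{(q)}(m) = \sum_{m=\delta}^{q^2}\tau^{(q)}(m)$, using $\sum_{m=1}^{q^2}\tau^{(q)}(m) = q^2$ (every pair $(s,t)$ contributes to exactly one product $m = st \le q^2$). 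This explains the sum term in the formula.

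Finally I would assemble: $\dim(\tilde E(\delta)) = \big(\#R_1 \text{ with } \sigma \geq \delta\big) + \big(\#R_2 \text{ with } \sigma \geq \delta\big)$. The $R_1$ part equals $(q^2-q)q - \big(\tfrac{a(a-1)}{2} + \min\{a,b\}\big) = q^3 - q^2 - \tfrac{a(a-1)}{2} - \min\{a,b\}$, and the $R_2$ part equals $\sum_{i=\delta}^{q^2}\tau^{(q)}(i)$, giving the claimed total. The main obstacle will be the $R_1$ count: getting the triangular term $\tfrac{a(a-1)}{2}$ and, especially, the $\min\{a,b\}$ correction exactly right requires care with the floor/ceiling behaviour when translating $iq + j(q+1) \le q^3 - \delta$ into a count over the skew lattice, and with the boundary between $R_1$ and $R_2$. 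A sanity check at $\delta = 1$ (where $a=b=0$, the sum telescopes to $q^2$, and the formula should return $q^3 = \dim \tilde E(1)$) and at $\delta$ near $q^2$ would be worth including to validate the bookkeeping.
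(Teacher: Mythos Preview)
Your approach is essentially the same as the paper's: split $H^\ast(Q)$ according to the two cases of $\sigma$, handle the product case via $\tau^{(q)}$, and for the linear case count the complement. The one refinement in the paper that you may find useful is that it further splits your region $R_1$ into $\Lambda_1=\{(i,j): i+j<q^2-q\}$ and $\Lambda_2=\{(i,j): i+j\geq q^2-q\}$; since every $\lambda\in\Lambda_1$ has $\sigma(\lambda)\geq q^2+1\geq\delta$, the only elements of $R_1$ with $\sigma(\lambda)<\delta$ lie in $\Lambda_2$, and one checks directly that $\sigma(\Lambda_2)=\{q+1,\,2q+1,\,2q+2,\,3q+1,\,3q+2,\,3q+3,\,\ldots,\,(q-1)q+(q-1)\}$, from which the count $\tfrac{a(a-1)}{2}+\min\{a,b\}$ of elements below $\delta=aq+b+1$ is immediate---this is exactly the ``combinatorial heart'' you left sketched, and the extra split makes it transparent without any change of variables or boundary reconciliation.
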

\begin{proof}
  We give the proof by partitioning $H^\ast(Q)$ in three disjoint sets:
  \begin{align*}
    \Lambda_1 &=\{iq+j(q+1)\in H^\ast(Q)\mid i+j< q^2-q,\; 0\leq i<q^2-q,\; 0\leq j<q\}\\
    \Lambda_2 &=\{iq+j(q+1)\in H^\ast(Q)\mid i+j\geq q^2-q,\; 0\leq i< q^2-q,\; 0\leq j<q\}\\
    \Lambda_3 &=\{iq+j(q+1)\in H^\ast(Q)\mid q^2-q\leq i<q^2,\; 0\leq j<q \}. 
  \end{align*}
  We first determine the cardinality of $\Lambda_2$. Considering some $iq+j(q+1)\in\Lambda_2$, and writing $i=q^2-q-k$, there are $q-k$ possible values of $j$.
  There are $q-1$ such integers $k$ since $q^2-2q+1\leq i<q^2-q$ within the set $\Lambda_2$. This implies that 
  \begin{equation*}
    |\Lambda_2|=\sum_{k=1}^{q-1}(q-k)=\frac{q(q-1)}{2}=g,
  \end{equation*}
  where $g$ is the genus of the Hermitian curve.
  From this, it is also seen that $|\Lambda_1|=q^3-q^2-|\Lambda_2|=q^3-q^2-g$.

  All elements $\lambda$ of $\Lambda_1$ satisfy $\sigma(\lambda)=q^3-\lambda$. The largest element $\lambda'$ in $\Lambda_1$ is given by $\lambda'=(q^2-2q)q+(q-1)(q+1)$, which has $\sigma(\lambda')=q^2+1$. Thus, all elements of $\Lambda_1$ have $\sigma(\lambda)\geq\delta$, meaning that $\Lambda_1$ contributes $|\Lambda_1|=q^3-q^2-g$ to the dimension of $\tilde{E}(\delta)$.

  In order to determine the number of elements in $\Lambda_2$ that satisfy $\sigma(\lambda)\geq\delta$, we compute $|\Lambda_2|-|\{\lambda\in\Lambda_2\mid\sigma(\lambda)<\delta\}|$. As was the case for $\Lambda_1$, all elements of $\Lambda_2$ have $\sigma(\lambda)=q^3-\lambda$. From this it follows that
  \begin{equation*}
    \sigma(\Lambda_2)=\{q+1,2q+1,2q+2,3q+1,3q+2,3q+3,4q+1,\ldots, (q-1)q+(q-1)\}.
  \end{equation*}
  Combining this with the assumption that $\delta-1=aq+b$, the number of elements in $\sigma(\Lambda_2)$ smaller than $\delta$ is exactly
  \begin{equation*}
    \sum_{i=1}^{a-1}i+\min\{a,b\}=\frac{a(a-1)}{2}+\min\{a,b\}.
  \end{equation*}
  Because the total number of elements is $|\Lambda_2|=g$, the set $\Lambda_2$ contributes $g-a(a-1)/2-\min\{a,b\}$ to the dimension.
  
  Finally, consider $\sigma(\Lambda_3)=\{\sigma(\lambda)\mid\lambda\in\Lambda_3\}$ as a multiset. We will count (with multiplicity) the number of elements $s\in\sigma(\Lambda_3)$ with $s\geq\delta$. Observe that $\sigma(\lambda)=(q^2-i)(q-j)$ for all the elements $\lambda=iq+j(q+1)\in\Lambda_3$. Hence, $s\in\sigma(\Lambda_3)$, if and only if $s=d\cdot\frac{s}{d}$ where $d\leq q$ and $\frac{s}{d}\leq q$. Since there are $\tau^{(q)}(s)$ such divisors $d$, it follows that the multiplicity of $s$ in $\sigma(\Lambda_3)$ is given by $\tau^{(q)}(s)$. Subsequently, the number of elements satisfying $s\geq\delta$ is
  \begin{equation*}
    \sum_{s=\delta}^{q^2}\tau^{(q)}(s).
  \end{equation*}
  By summing the contribution from each of the sets $\Lambda_1$, $\Lambda_2$, and $\Lambda_3$, we obtain the dimension as claimed.
\end{proof}
We note that dimension formula in Proposition~\ref{prop:hermitDimension} does not provide an efficient method for computing the dimension of the code $\tilde{E}(\delta)$. Since the set $\Lambda_3$ defined in the proof has $q^2$ elements, we can loop over all of these to compute the dimension in $\Theta(q^2)$ operations, whereas the formula in Proposition~\ref{prop:hermitDimension} requires $\mathcal{O}(q^3)$. It does, however, provide an advantage when we are not interested in a dimension, but rather a codimension as will be the case in Section~\ref{sec:steaneHermitian}. Here, we will only need to compute $\tau^{(q)}$ for $m$ values, where $m$ is a small integer; typically $m=1$ or $m=2$.

If only a lower bound for the dimension is needed, Lemma~6 of \cite{qcHermit} implies that the sum in Proposition~\ref{prop:hermitDimension} can be bounded below by $q^2-\lfloor\delta+\delta\ln(q^2/\delta)\rfloor$ for $q\leq\delta <q^2$ and by $q^2-\lfloor\delta+\delta\ln(\delta)\rfloor$ for $\delta<q$.

\section{Steane-enlargement of Hermitian codes}\label{sec:steaneHermitian}
In order to apply Steane-enlargement to the codes defined in Section~\ref{sec:order-bound-improved}, we now determine a necessary and sufficient condition for $\tilde{E}(\delta)$ to be self-orthogonal. While this is possible to do by considering the improved codes directly, it is easier to prove via the self-orthogonality of usual one-point Hermitian codes. The latter is well-known, and the following result was given in \cite{Tiersma}, and can also be found in \cite[Prop.\,8.3.2]{Stichtenoth}.
\begin{proposition}\label{prop:selfOrthUsual}
  The code $C_{\mathcal{L}}(D,(q^3-\delta)Q)$ is self-orthogonal, if and only if
  \begin{equation}\label{eq:hermitianSelfOrthCond}
    \delta\leq \left\lfloor\frac{1}{2}(q^3-q^2+q)\right\rfloor+1.
  \end{equation}
\end{proposition}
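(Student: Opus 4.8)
The plan is to reduce the claim to the well-known duality for one-point Hermitian codes and then to a comparison of pole orders at $Q$. Recall that there is a Weil differential $\eta$ on the Hermitian function field with simple poles of residue $1$ at each of $P_1,\dots,P_{q^3}$ and divisor $(\eta)=(q^3+2g-2)Q-D$, where $g=\tfrac{q(q-1)}{2}$ is the genus. This differential yields the standard identity $C_{\mathcal{L}}(D,mQ)^\perp=C_{\mathcal{L}}(D,(q^3+2g-2-m)Q)$. Writing $m=q^3-\delta$ and using $2g-2=q^2-q-2$, the dual of $C_{\mathcal{L}}(D,(q^3-\delta)Q)$ is therefore $C_{\mathcal{L}}(D,m^\perp Q)$ with $m^\perp=q^2-q-2+\delta$.

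Next I would fix the relevant notion of self-orthogonality. The constructions this paper builds on ask for a code that \emph{contains} its Euclidean dual: Theorem~\ref{thm:qArySteane} assumes $\mathcal{C}^\perp\subseteq\mathcal{C}$, and the dimension $2k-n$ in Corollary~\ref{cor:steaneSelfOrth} is nonnegative precisely in this situation. Accordingly, I read self-orthogonality here as $C^\perp\subseteq C$, so that the task is to show $C_{\mathcal{L}}(D,m^\perp Q)\subseteq C_{\mathcal{L}}(D,(q^3-\delta)Q)$ if and only if the stated bound holds. This is the point to be careful about: the literal reading $C\subseteq C^\perp$ would instead give the reversed inequality $\delta\geq\lfloor\tfrac12(q^3-q^2+q)\rfloor+1$.

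The heart of the argument is then to convert code containment into the comparison $m^\perp\le m$. The codes $C_{\mathcal{L}}(D,jQ)$ form an ascending chain as $j$ grows. The Weierstraß semigroup of $Q$ is $\langle q,q+1\rangle$, whose Frobenius number is $q^2-q-1=2g-1$, so every integer exceeding $2g-1$ is a pole number; combined with injectivity of evaluation for pole order below $n=q^3$, this makes $\dim C_{\mathcal{L}}(D,jQ)=j-g+1$ strictly increasing in $j$ throughout $2g-1\le j<q^3$. Since $m^\perp=q^2-q-2+\delta\ge 2g-1$ for every $\delta\ge 1$, and $m=q^3-\delta$ likewise lies in this range for the $\delta$ at issue, I obtain $C_{\mathcal{L}}(D,m^\perp Q)\subseteq C_{\mathcal{L}}(D,mQ)$ exactly when $m^\perp\le m$: the implication $(\Leftarrow)$ is monotonicity of Riemann--Roch spaces, while $(\Rightarrow)$ holds because $m^\perp>m$ would force $\dim C^\perp>\dim C$, contradicting containment.

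It remains to unwind $m^\perp\le m$. This reads $q^2-q-2+\delta\le q^3-\delta$, i.e.\ $2\delta\le q^3-q^2+q+2$, i.e.\ $\delta\le\tfrac12(q^3-q^2+q)+1$; as $\delta$ is an integer this is exactly $\delta\le\lfloor\tfrac12(q^3-q^2+q)\rfloor+1$, regardless of the parity of $q$. The main obstacle is conceptual rather than computational, namely pinning down that the inclusion to be verified is $C^\perp\subseteq C$ and not its reverse; once that is settled, the chain-and-dimension argument makes the equivalence immediate. The only loose end is the degenerate regime of very large $\delta$, where $m$ drops below $2g-1$, but there $\dim C\le m+1<n/2\le\dim C^\perp$, so the code trivially fails to contain its dual, in agreement with the bound.
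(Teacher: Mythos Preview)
The paper does not actually prove this proposition; it is stated as a known fact with citations to Tiersma and to Stichtenoth's textbook. So there is no ``paper's own proof'' to compare against beyond those references.

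Your argument is correct and is essentially the standard one found in those sources: compute the dual via the differential with divisor $(q^3+2g-2)Q-D$, reduce the question to the inequality $m^\perp\le m$, and solve. You are also right to flag the terminology: the paper uses ``self-orthogonal'' to mean \emph{dual-containing} ($\mathcal{C}^\perp\subseteq\mathcal{C}$), as is clear from the way Corollary~\ref{cor:selfOrthImprov} is invoked to feed Theorem~\ref{thm:qArySteane} in the proofs of Propositions~\ref{prop:enlargeUsualHermit} and~\ref{prop:enlargeImprovHermit}. With that reading, your inequality comes out with the correct orientation. The chain-and-dimension justification for the ``only if'' direction and the handling of the degenerate large-$\delta$ range are fine.
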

\begin{corollary}\label{cor:selfOrthImprov}
  The code $\tilde{E}(\delta)$ is self-orthogonal, if and only if $\delta$ satisfies~\eqref{eq:hermitianSelfOrthCond}.
\end{corollary}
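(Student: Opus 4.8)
The plan is to leverage Proposition~\ref{prop:selfOrthUsual} together with Lemma~\ref{lem:improvedEqualUsual}, splitting on whether $\delta$ falls in the "improved" range $\delta \leq q^2 - q$ or in the range $\delta > q^2 - q$ where $\tilde{E}(\delta)$ collapses to a usual one-point code.

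First consider the range $\delta > q^2-q$. By Lemma~\ref{lem:improvedEqualUsual} we have $\tilde{E}(\delta) = C_{\mathcal{L}}(D,(q^3-\delta)Q)$ for $\delta > q^2-q$, so self-orthogonality of $\tilde{E}(\delta)$ is equivalent to self-orthogonality of $C_{\mathcal{L}}(D,(q^3-\delta)Q)$, which by Proposition~\ref{prop:selfOrthUsual} is equivalent to~\eqref{eq:hermitianSelfOrthCond}. One should also check the boundary value $\delta = q^2-q$: here Lemma~\ref{lem:improvedEqualUsual} only gives the strict inclusion $C_{\mathcal{L}}(D,(q^3-(q^2-q))Q) \subsetneq \tilde{E}(q^2-q)$, so the argument for $\delta = q^2-q$ should be folded into the treatment of the small-$\delta$ case below (note $q^2-q$ comfortably satisfies~\eqref{eq:hermitianSelfOrthCond} for $q \geq 2$, so we must show $\tilde{E}(q^2-q)$ is indeed self-orthogonal).

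For the range $\delta \leq q^2-q$, the key observation is a monotonicity/nesting property: the improved codes are nested, $\tilde{E}(\delta) \subseteq \tilde{E}(\delta')$ whenever $\delta \geq \delta'$, since the defining condition $\sigma(\lambda) \geq \delta$ becomes more restrictive as $\delta$ grows. Now, for any $\delta \leq q^2-q$ we have $\delta \leq q^2-q < \lfloor \tfrac{1}{2}(q^3-q^2+q)\rfloor + 1$ (valid for $q \geq 2$), so~\eqref{eq:hermitianSelfOrthCond} holds automatically; thus the corollary asserts that every such $\tilde{E}(\delta)$ is self-orthogonal. It therefore suffices to prove that the largest code in this family, $\tilde{E}(1)$ — or more carefully, $\tilde{E}(\delta_0)$ for the smallest admissible $\delta_0$ — is self-orthogonal, since all the others are subcodes of it and self-orthogonality is inherited by subcodes. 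The cleanest route is to bound $\tilde{E}(\delta)$ for $\delta \le q^2 - q$ inside a self-orthogonal usual code: one shows $\tilde{E}(q^2-q) \subseteq C_{\mathcal{L}}(D, m Q)$ for a suitable $m$ with $m \le \lfloor \tfrac{1}{2}(q^3+q^2-q) \rfloor$ (equivalently $q^3 - m$ satisfies~\eqref{eq:hermitianSelfOrthCond}). Indeed, every $f_\lambda$ appearing in $\tilde{E}(q^2-q)$ has $\lambda \in H^\ast(Q)$ with $\sigma(\lambda) \ge q^2-q$; translating this pole-order constraint via the definition of $\sigma$ gives an upper bound on $\lambda$, hence an inclusion $\tilde{E}(q^2-q) \subseteq \mathcal{L}(\lambda_{\max} Q)$-evaluation code, and one checks $\lambda_{\max}$ meets the self-orthogonality threshold. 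Combined with the nesting property, every $\tilde{E}(\delta)$ with $\delta \le q^2-q$ is then self-orthogonal.

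The main obstacle is the small-$\delta$ case: one must pin down exactly which pole orders $\lambda$ can contribute to $\tilde{E}(\delta)$ and verify that the maximal such $\lambda$ — in particular the contributions from $\Lambda_3$, where $\sigma(\lambda) = (q^2-i)(q-j)$ can be small even for large $\lambda = iq + j(q+1)$ — still lies below the self-orthogonality bound $\lfloor \tfrac{1}{2}(q^3+q^2-q)\rfloor$. The worrying elements are those in $\Lambda_3$ with $\sigma(\lambda) \ge q^2-q$: for $i$ close to $q^2$ this forces $j$ small, keeping $\lambda$ bounded, but the estimate needs to be done carefully. If the bound $\lambda_{\max} \le \lfloor \tfrac{1}{2}(q^3+q^2-q)\rfloor$ turns out to fail for some such $\lambda$, one instead argues directly via the residue/duality pairing as in the proof of Proposition~\ref{prop:selfOrthUsual}, using that the product $f_\lambda f_{\lambda'}$ of two basis functions of $\tilde{E}(\delta)$ has pole order at most $2\lambda_{\max} < q^3 - q^2 + q$ and hence its evaluation vector is orthogonal to the all-ones functional underlying the trace pairing. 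Finally, for the converse direction in the small-$\delta$ range there is nothing to prove, since~\eqref{eq:hermitianSelfOrthCond} is automatically satisfied there.
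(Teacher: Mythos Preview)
Your split into the ranges $\delta>q^2-q$ and $\delta\le q^2-q$ matches the paper, and your treatment of the first range is fine. The second range, however, rests on a directional error. In this paper ``self-orthogonal'' is used in the dual-containing sense $\mathcal{C}^\perp\subseteq\mathcal{C}$; this is forced by Corollary~\ref{cor:steaneSelfOrth} (dimension $2k-n\ge 0$), by Theorem~\ref{thm:qArySteane} (which needs $\mathcal{C}^\perp\subseteq\mathcal{C}$), and by the numerics of Proposition~\ref{prop:selfOrthUsual}. Dual-containment is \emph{not} inherited by subcodes---it is inherited by \emph{supercodes}: if $\mathcal{C}\subseteq\mathcal{C}'$ and $\mathcal{C}^\perp\subseteq\mathcal{C}$, then $\mathcal{C}'^\perp\subseteq\mathcal{C}^\perp\subseteq\mathcal{C}\subseteq\mathcal{C}'$. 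Hence your plan of embedding $\tilde{E}(q^2-q)$ into a ``self-orthogonal'' usual code $C_{\mathcal{L}}(D,mQ)$ points the wrong way; and indeed it cannot succeed, because $\tilde{E}(q^2-q)$ contains evaluations of $f_\lambda$ with $\lambda\in\Lambda_3$ of pole order close to $q^3$ (e.g.\ $i=q^2-q,\ j=1$ gives $\lambda=q^3-q^2+q+1$), far above the threshold $\lfloor\tfrac12(q^3+q^2-q)\rfloor$. Equivalently, for small $\delta$ the code $\tilde{E}(\delta)$ has dimension well above $q^3/2$, so it cannot be self-orthogonal in the standard sense at all. Your fallback ``pairing'' argument suffers from the same reversal: checking that products $f_\lambda f_{\lambda'}$ have bounded pole order would verify $\tilde{E}(\delta)\subseteq\tilde{E}(\delta)^\perp$, which is the wrong inclusion here.

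The paper's argument for $\delta\le q^2-q$ is the one-line observation you are missing: from the nesting $\tilde{E}(q^2-q+1)\subsetneq\tilde{E}(\delta)$ and the fact (already established in the first range) that $\tilde{E}(q^2-q+1)=C_{\mathcal{L}}(D,(q^3-q^2+q-1)Q)$ is dual-containing, the supercode inheritance above immediately gives $\tilde{E}(\delta)^\perp\subseteq\tilde{E}(\delta)$. No analysis of $\Lambda_3$ or pole-order bounds is needed.
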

\begin{proof}
  For $\delta>q^2-q$, Lemma~\ref{lem:improvedEqualUsual} ensures that $\tilde{E}(\delta)=C_{\mathcal{}}(D,(q^3-\delta)Q)$, and the result follows from Proposition~\ref{prop:selfOrthUsual}. For smaller values of $\delta$, the result follows from the observation that $\tilde{E}(q^2-q+1)\subsetneq\tilde{E}(\delta)$.
\end{proof}
In Theorem~\ref{thm:qArySteane}, the relative distances of the code pairs are used to determine the distance of the resulting quantum code. In the case of one-point Hermitian codes and order bound improved Hermitian codes, however, the relative and non-relative distances coincide.
To see this, consider codes $\mathcal{C}^\perp\subsetneq\mathcal{C}\subsetneq\mathcal{C}'$. Since $\mathcal{C}$ is self-orthogonal, Proposition~\ref{prop:selfOrthUsual} and Corollary~\ref{cor:selfOrthImprov} ensure that $\mathcal{C}'^\perp\subsetneq C_{\mathcal{L}}(D,(q^3-\delta_{\max})Q)\subseteq \mathcal{C}$ where $\delta_{\max}=\left\lfloor\frac{1}{2}(q^3-q^2+q)\right\rfloor+1$ as in \eqref{eq:hermitianSelfOrthCond}.
In particular, this also implies the inclusions $\mathcal{C}'^\perp\subseteq C_{\mathcal{L}}(D,(q^3-\delta_{\max}-1)Q)$, so that every codeword of $\mathcal{C}'^\perp$ has Hamming weight at least $d(C_{\mathcal{L}}(D,(q^3-\delta_{\max}-1)Q))=\delta_{\max}+1$, which exceeds both $d(\mathcal{C})$ and $d(\mathcal{C}')$. Thus, relative distances satisfy $d(\mathcal{C},\mathcal{C}'^\perp)=d(\mathcal{C})$ and $d(\mathcal{C}',\mathcal{C}'^\perp)=d(\mathcal{C}')$. For this reason, we only need to consider the non-relative distances in the proofs below.

In the following proposition, we explore the Steane-enlargement from Theorem~\ref{thm:qArySteane} applied to the usual one-point Hermitian codes. That is, we show by how much the dimension of the symmetric quantum error correcting code can be increased without decreasing its minimal distance. Before giving the result itself, we state the following lemma, which follows from \cite{YangKumar}.
\begin{lemma}\label{lem:containedWeierstrass}
  If $\lambda\in\mathbb{N}$ satisfies $2g\leq \lambda<q^3$, then $\lambda\in H^\ast(Q)$.
\end{lemma}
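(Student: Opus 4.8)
The plan is to show that if $2g \le \lambda < q^3$, then the code $C_{\mathcal{L}}(D,\lambda Q)$ strictly contains $C_{\mathcal{L}}(D,(\lambda-1)Q)$, which is precisely the defining condition for $\lambda \in H^\ast(Q)$. First I would recall the standard exact-sequence/dimension description of one-point codes: for $\lambda < n = q^3$ the evaluation map $\ev_D$ is injective on $\mathcal{L}(\lambda Q)$, so $\dim C_{\mathcal{L}}(D,\lambda Q) = \dim \mathcal{L}(\lambda Q) = \ell(\lambda Q)$. Hence the two codes differ exactly when $\ell(\lambda Q) > \ell((\lambda-1)Q)$, i.e. when $\lambda$ is a non-gap, i.e. $\lambda \in H(Q)$. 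So the statement reduces to the purely numerical claim that every integer in the range $[2g, q^3)$ lies in the Weierstra\ss\ semigroup $H(Q)$.

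For that numerical claim I would invoke the classical fact that the largest gap of any numerical semigroup of genus $g$ is the Frobenius number, which is at most $2g-1$; consequently every integer $\lambda \ge 2g$ belongs to $H(Q)$. (For the Hermitian place this is completely explicit: $H(Q) = \langle q, q+1 \rangle$, whose Frobenius number is $q(q+1) - q - (q+1) = q^2 - q - 1 = 2g - 1$, so indeed every $\lambda \ge 2g = q^2 - q$ is a sum $aq + b(q+1)$ with $a,b \ge 0$.) Combining this with $\lambda < q^3$ gives $\lambda \in H^\ast(Q)$ via the dimension argument of the previous paragraph. The citation to \cite{YangKumar} is what pins down the relevant semigroup data for the Hermitian curve and the injectivity range of $\ev_D$.

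The only mild subtlety — and the step I would be most careful about — is the boundary behaviour near $\lambda = q^3$: one must ensure that $\ev_D$ is still injective on $\mathcal{L}(\lambda Q)$ for $\lambda$ up to $q^3 - 1$, since a function in $\mathcal{L}(\lambda Q)$ with $\lambda \ge n$ could in principle vanish on all of $P_1,\dots,P_n$ and thereby fail to witness the strict inclusion of codes. This is exactly why the hypothesis reads $\lambda < q^3$ rather than $\lambda \le q^3$: for $\lambda < q^3 = \deg D$, a nonzero function vanishing on all of $\operatorname{supp}(D)$ would lie in $\mathcal{L}(\lambda Q - D)$, which is $\{0\}$ since $\deg(\lambda Q - D) < 0$. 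With that observation in place the proof is immediate from the semigroup inclusion above.
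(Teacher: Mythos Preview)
Your argument is correct and is exactly the content behind the paper's treatment: the paper does not give a proof of this lemma at all but simply states that it ``follows from \cite{YangKumar},'' i.e.\ from the known description of $H(Q)=\langle q,q+1\rangle$ and the injectivity of $\ev_D$ on $\mathcal{L}(\lambda Q)$ for $\lambda<q^3$. You have merely spelled out those two ingredients and combined them in the obvious way, so there is nothing to compare—your proposal is the natural unpacking of the citation.
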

\begin{proposition}\label{prop:enlargeUsualHermit}
  Assume that $\delta$ satisfies \eqref{eq:hermitianSelfOrthCond}, and additionally that $\delta\geq q^2+3$. If $k$ denotes the dimension of $C_{\mathcal{L}}(D,(q^3-\delta)Q)$, then there exists a quantum code with parameters
  \begin{equation}\label{eq:qeccUsualHermit}
    \Big[\Big[q^3,2k-q^3+\big\lceil\textstyle\frac{\delta-1}{q^2+1}\big\rceil,\geq\delta\Big]\Big]_{q^2}.
  \end{equation}
\end{proposition}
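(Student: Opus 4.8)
The plan is to apply Theorem~\ref{thm:qArySteane} over $\mathbb{F}_{q^2}$, taking $\mathcal{C}=C_{\mathcal{L}}(D,(q^3-\delta)Q)$ as the dual-containing code and $\mathcal{C}'=C_{\mathcal{L}}(D,(q^3-\delta')Q)$ with $\delta'=\delta-\lceil\frac{\delta-1}{q^2+1}\rceil$ as its enlargement. First I would fix the setup: since $\delta$ satisfies~\eqref{eq:hermitianSelfOrthCond}, Proposition~\ref{prop:selfOrthUsual} gives $\mathcal{C}^\perp\subseteq\mathcal{C}$, so $\mathcal{C}$ contains its Euclidean dual as Theorem~\ref{thm:qArySteane} requires. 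From~\eqref{eq:hermitianSelfOrthCond} together with $\delta\geq q^2+3$ one checks $2g-1\leq q^3-\delta<q^3=\deg D$, so evaluation is injective on $\mathcal{L}((q^3-\delta)Q)$ and Riemann--Roch yields $k=\dim\mathcal{C}=q^3-\delta-g+1$. The same bounds hold with $\delta'$ in place of $\delta$ (one checks $1\leq\delta'<\delta$), so $\dim\mathcal{C}'=q^3-\delta'-g+1$, whence $\mathcal{C}\subsetneq\mathcal{C}'$ and $\dim\mathcal{C}'-\dim\mathcal{C}=\delta-\delta'=\lceil\frac{\delta-1}{q^2+1}\rceil$.

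Next I would check the remaining hypotheses and read off the parameters. Theorem~\ref{thm:qArySteane} needs the dimension increase to be at least $2$, i.e.\ $\lceil\frac{\delta-1}{q^2+1}\rceil\geq 2$, and this is equivalent to $\delta-1>q^2+1$; this is precisely where the assumption $\delta\geq q^2+3$ is used. The theorem then produces a quantum code of length $q^3$, dimension $\dim\mathcal{C}+\dim\mathcal{C}'-q^3=2k-q^3+\lceil\frac{\delta-1}{q^2+1}\rceil$, and minimal distance at least $\min\{d,\lceil(1+\frac{1}{q^2})d'\rceil\}$ with $d=d(\mathcal{C},\mathcal{C}'^\perp)$ and $d'=d(\mathcal{C}',\mathcal{C}'^\perp)$. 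Since $\mathcal{C}'^\perp\subsetneq\mathcal{C}^\perp\subseteq\mathcal{C}\subsetneq\mathcal{C}'$, each relative distance is at least the corresponding ordinary minimal distance (the observation made just before the proposition), and the Goppa bound gives $d\geq d(\mathcal{C})\geq\delta$ and $d'\geq d(\mathcal{C}')\geq\delta'$.

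The step I expect to require the most care is verifying $\lceil(1+\frac{1}{q^2})d'\rceil\geq\delta$, which is what keeps the enlargement from reducing the distance. Using $d'\geq\delta'$ and monotonicity of the ceiling, it suffices to show $\frac{q^2+1}{q^2}\bigl(\delta-\lceil\frac{\delta-1}{q^2+1}\rceil\bigr)>\delta-1$; clearing denominators, this reads $(q^2+1)\lceil\frac{\delta-1}{q^2+1}\rceil<\delta+q^2$, which is immediate from $\lceil x\rceil<x+1$ applied to $x=\frac{\delta-1}{q^2+1}$. Hence the minimal distance is at least $\min\{\delta,\delta\}=\delta$, and the parameters match~\eqref{eq:qeccUsualHermit}. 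The key point is that $\lceil\frac{\delta-1}{q^2+1}\rceil$ is exactly the largest dimension increase for which this last inequality survives; everything else is routine bookkeeping with Riemann--Roch and the Goppa bound.
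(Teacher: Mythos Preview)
Your proof is correct and follows essentially the same route as the paper: the same choice of $\delta'=\delta-\lceil(\delta-1)/(q^2+1)\rceil$, the same use of Proposition~\ref{prop:selfOrthUsual} and Theorem~\ref{thm:qArySteane}, and the same inequality $(1+1/q^2)\delta'>\delta-1$ for the distance. The only cosmetic differences are that the paper computes the codimension via Lemma~\ref{lem:containedWeierstrass} (membership in $H^\ast(Q)$) where you invoke Riemann--Roch directly, and the paper quotes the exact distance $d(C_{\mathcal{L}}(D,(q^3-\delta)Q))=\delta$ via Lemma~\ref{lem:improvedEqualUsual} where you use the Goppa bound $d\geq\delta$; both variants suffice.
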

\begin{proof}
  According to Proposition~\ref{prop:selfOrthUsual}, the code $C_{\mathcal{L}}(D,(q^3-\delta)Q)$ is self-orthogo\-nal. Letting $\delta'=\delta-\lceil(\delta-1)/(q^2+1)\rceil$, it is also seen that $C_{\mathcal{L}}(D,(q^3-\delta)Q)\subseteq C_{\mathcal{L}}(D,(q^3-\delta')Q)$. Lemma~\ref{lem:containedWeierstrass} ensures that the $\lceil(\delta-1)/(q^2+1)\rceil$ integers $\delta-1,\delta-2,\ldots,\delta'$ are all included in $H^\ast(Q)$, meaning that the dimension of $C_{\mathcal{L}}(D,(q^3-\delta')Q)$ is $k+\lceil(\delta-1)/(q^2+1)\rceil\geq k+2$. Thus, we can apply Theorem~\ref{thm:qArySteane} to obtain a quantum code over $\mathbb{F}_{q^2}$ of length and dimension as in~\eqref{eq:qeccUsualHermit}. This code has minimal distance at least $\delta$ since
  \begin{equation*}
    \left(1+\frac{1}{q^2}\right)\delta'> \left(1+\frac{1}{q^2}\right)\left(\delta-\frac{\delta-1}{q^2+1}-1\right)=\delta-1,
  \end{equation*}
  and since Lemma~\ref{lem:improvedEqualUsual} ensures that $d\big(C_{\mathcal{L}}(D,(q^3-\delta)Q)\big)=d(\tilde{E}(\delta))=\delta$.
\end{proof}
We now turn our attention to the order bound improved codes, and begin by considering the case where both codes can be described as improved codes.
\begin{proposition}\label{prop:enlargeImprovedHermit}
  Assume that $\delta\in\sigma(H^\ast(Q))$, and that $2\leq\delta\leq q^2$. Let $k$ denote the dimension of $\tilde{E}(\delta)$, and choose an $m\in\{1,2,\ldots,\delta-1\}$. Write $\delta-1=aq+b$ and $\delta-m-1=a'q+b'$ such that $0\leq b,b'<q$, and define
  \begin{equation}\label{eq:quantumParamsHermit}
    K=\min\{a,b\}-\min\{a',b'\}+\frac{a(a-1)-a'(a'-1)}{2}+\sum_{i=1}^{m}\tau^{(q)}(\delta-i).
  \end{equation}
  If $K\geq 2$, then there exists a $[[q^3,2k-q^3+K,\geq\delta-m+1]]_{q^2}$ quantum code.
\end{proposition}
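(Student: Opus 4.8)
The plan is to apply the $q$-ary Steane-enlargement of Theorem~\ref{thm:qArySteane}, over the field $\mathbb{F}_{q^2}$, to the nested pair $\mathcal{C}=\tilde{E}(\delta)$ and $\mathcal{C}'=\tilde{E}(\delta-m)$. Since $m\le\delta-1$ we have $\delta-m\ge1$, so $\mathcal{C}'$ is a genuine improved code, and $\mathcal{C}\subseteq\mathcal{C}'$ because enlarging the designed distance can only shrink the code. It then remains to verify three things: that $\mathcal{C}$ contains its Euclidean dual; that $\mathcal{C}\subsetneq\mathcal{C}'$ with $\dim\mathcal{C}'-\dim\mathcal{C}=K\ge2$; and that the distance bound produced by the theorem is at least $\delta-m+1$.

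For the dual-containment: since $2\le\delta\le q^2$ and $q^2\le\lfloor\frac{1}{2}(q^3-q^2+q)\rfloor+1$ for every prime power $q$ (equivalently $q^3-3q^2+q+2\ge0$, with equality only at $q=2$), the designed distance $\delta$ satisfies \eqref{eq:hermitianSelfOrthCond}, so Corollary~\ref{cor:selfOrthImprov} shows $\tilde{E}(\delta)$ is self-orthogonal, i.e.\ it contains its dual --- exactly the hypothesis of Theorem~\ref{thm:qArySteane}, as in the proof of Proposition~\ref{prop:enlargeUsualHermit}. For the codimension: because $1\le\delta-m\le\delta\le q^2$, Proposition~\ref{prop:hermitDimension} applies to both designed distances; writing $\delta-1=aq+b$ and $(\delta-m)-1=a'q+b'$ and subtracting the two dimension formulae, the common term $q^3-q^2$ cancels and the two tail sums combine into $\sum_{i=\delta-m}^{\delta-1}\tau^{(q)}(i)=\sum_{i=1}^{m}\tau^{(q)}(\delta-i)$, so that $\dim\tilde{E}(\delta-m)-k$ is precisely the quantity $K$ of \eqref{eq:quantumParamsHermit}. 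Hence $k':=\dim\mathcal{C}'=k+K$, and the hypothesis $K\ge2$ makes the inclusion proper and guarantees $k'\ge k+2$. This bookkeeping --- keeping the two instances of Proposition~\ref{prop:hermitDimension} straight and checking the side conditions that make it applicable --- is the only step that needs real care.

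With these in hand, Theorem~\ref{thm:qArySteane} produces a quantum code over $\mathbb{F}_{q^2}$ of length $q^3$ and dimension $k+k'-q^3=2k-q^3+K$ and minimum distance at least $\min\{d,\lceil(1+q^{-2})d'\rceil\}$, where $d=d(\mathcal{C},\mathcal{C}'^\perp)$ and $d'=d(\mathcal{C}',\mathcal{C}'^\perp)$. Because $\mathcal{C}$ is self-orthogonal and $\mathcal{C}^\perp\subseteq\mathcal{C}\subsetneq\mathcal{C}'$, we are in the situation discussed just before Proposition~\ref{prop:enlargeUsualHermit} (the argument there uses only $\mathcal{C}^\perp\subseteq\mathcal{C}$ and properness of $\mathcal{C}\subsetneq\mathcal{C}'$, so the borderline case $\mathcal{C}^\perp=\mathcal{C}$ is harmless): every codeword of $\mathcal{C}'^\perp$ is too heavy to be of minimum weight in $\mathcal{C}$ or $\mathcal{C}'$, whence the relative distances reduce to ordinary ones, $d=d(\tilde{E}(\delta))=\delta$ --- here invoking $\delta\in\sigma(H^\ast(Q))$ --- and $d'=d(\tilde{E}(\delta-m))\ge\delta-m$ since the order bound never falls below the designed distance. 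Finally, $\delta-m\ge1$ forces $(1+q^{-2})(\delta-m)>\delta-m$, so $\lceil(1+q^{-2})d'\rceil\ge\lceil(1+q^{-2})(\delta-m)\rceil\ge\delta-m+1$, while $m\ge1$ gives $\delta\ge\delta-m+1$; therefore the minimum distance is at least $\delta-m+1$, and we obtain the asserted $[[q^3,\,2k-q^3+K,\,\ge\delta-m+1]]_{q^2}$ quantum code.
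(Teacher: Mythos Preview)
Your proof is correct and follows essentially the same route as the paper's: apply Theorem~\ref{thm:qArySteane} to the pair $\tilde{E}(\delta)\subseteq\tilde{E}(\delta-m)$, read off the codimension $K$ by subtracting the two instances of Proposition~\ref{prop:hermitDimension}, and bound the distance via $\lceil(1+q^{-2})(\delta-m)\rceil=\delta-m+1$. You are a bit more careful than the paper in explicitly checking that $\delta\le q^2$ forces \eqref{eq:hermitianSelfOrthCond}, and in noting that only the inequality $d(\tilde{E}(\delta-m))\ge\delta-m$ is needed (so $\delta-m\in\sigma(H^\ast(Q))$ is not required), but these are refinements of the same argument rather than a different approach.
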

\begin{proof}
  Consider any $m$ such that $1\leq m<\delta$, and define $\delta'=\delta-m$. By Corollary~\ref{cor:selfOrthImprov}, the code $\tilde{E}(\delta)$ is self-orthogonal. Furthermore, $\tilde{E}(\delta)\subseteq\tilde{E}(\delta')$, and Proposition~\ref{prop:hermitDimension} implies that the dimension difference is $\dim(\tilde{E}(\delta'))-\dim(\tilde{E}(\delta))=K$. Thus, if $K\geq 2$, we can apply Theorem~\ref{thm:qArySteane} to obtain a quantum code, whose dimension is $2k-q^3+K$.
  To determine its minimal distance, we see that
  \begin{equation*}
    \left\lceil\left(1+\frac{1}{q^2}\right)\delta'\right\rceil = \left\lceil(\delta-m)+\frac{\delta-m}{q^2}\right\rceil=\delta-m+1.
  \end{equation*}
  The result follows from the fact that $\min\{\delta,\delta-m+1\}=\delta-m+1$.
\end{proof}
To fully describe the quantum codes that can be constructed using the order bound improved codes, it is also necessary to consider the case where an ordinary one-point Hermitian code is enlarged to an improved code. Otherwise, we would neglect certain cases where the order bound improved codes are in some sense `too good' to be used for enlargement as shown in the following example.
\begin{example}
  Consider the code pair $C_{\mathcal{L}}(D,52Q)\subsetneq C_{\mathcal{L}}(D,54Q)$ over $\mathbb{F}_{16}$. These codes have codimension $2$, and $C_{\mathcal{L}}(D,52Q)$ is self-orthogonal, meaning that we can apply Theorem~\ref{thm:qArySteane} to obtain a quantum code of dimension $2\cdot 47-64+2=32$ and minimal distance $d=\min\{12,(1+1/16)\cdot 10\}=11$. Using improved codes only, it is not possible to obtain as good parameters. The reason for this is that the codimension between $\tilde{E}(12)$ and $\tilde{E}(10)$ is only $1$. In fact, we have the inclusions
  \begin{equation*}
    C_{\mathcal{L}}(D,52Q)\subsetneq \tilde{E}(12)\subsetneq \tilde{E}(10)=C_{\mathcal{L}}(D,54Q).
  \end{equation*}
  Thus, if we restrict ourselves to improved codes only, we need to either start from a code smaller than $\tilde{E}(12)$ or enlarge to a code larger than $\tilde{E}(10)$. But neither option gives as good parameters as applying Steane-enlargement to $C_{\mathcal{L}}(D,52Q)\subsetneq\tilde{E}(10)$.
\end{example}
Despite the above observations, we shall refrain from stating the resulting parameters in a separate proposition since it would essentially say no more than Theorem~\ref{thm:qArySteane}. That is, such enlargements are in general not well-behaved enough to give meaningful formulae for their codimensions and minimal distances apart from the obvious ones, which already appear in Theorem~\ref{thm:qArySteane}.

To conclude this section, we give a few examples over $\mathbb{F}_{16}$ to illustrate the constructions presented in this section.
\begin{example}
  Let $q=4$, $\delta=20$, and consider $C_{\mathcal{L}}(D,(q^3-\delta)Q)=C_{\mathcal{L}}(D,44Q)$. As in Proposition~\ref{prop:enlargeUsualHermit} we set $\delta'=20-\lceil19/17\rceil=18$, and apply Theorem~\ref{thm:qArySteane} to the pair $C_{\mathcal{L}}(D,44Q)\subsetneq C_{\mathcal{L}}(D,46Q)$. This yields a quantum code with parameters $[[64,16,20]]_{16}$. Had we instead applied Corollary~\ref{cor:steaneSelfOrth} directly to $C_{\mathcal{L}}(D,44Q)$, the resulting parameters would be $[[64,14,20]]_{16}$.
\end{example}
\begin{example}
  The order bound improved code $\tilde{E}(5)$ is self-dual by Corollary~\ref{cor:selfOrthImprov}, and has parameters $[64,56,5]_{16}$. This code is contained in $\tilde{E}(4)$, which is a $[64,59,4]_{16}$-code. By applying the Steane-enlargement-technique, Theorem~\ref{thm:qArySteane}, we obtain a quantum code of length $64$, dimension $2\cdot 56-64+3$, and a minimal distance of at least
  \begin{equation*}
    \min\left\{5,\;\left\lceil\left(1+\frac{1}{16}\right)4\right\rceil\right\}=5.
  \end{equation*}
  That is, we can construct a $[[64,51,5]]_{16}$-quantum code. If only one-point Hermitian codes are used, the best quantum code of dimension $51$ has parameters $[[64,51,4]]_{16}$ stemming from $C_{\mathcal{L}}(D,60Q)\subsetneq C_{\mathcal{L}}(D,66Q)$.

  A graphical representation of the code inclusions can be found in Figure~\ref{fig:inclusions}.
\end{example}
\begin{figure}[hp]
  \centering
  \begin{tikzpicture}[xscale=0.7,yscale=0.5]
    \tikzset{every node/.style={inner sep=0,anchor=center,font=\scriptsize}}
    \path(-0.40,-0.40) -- (15.40,3.40); 
    \draw[draw=none,preaction={clip,postaction={fill=white, draw=Gray!25,line width=3pt, dotted}},rounded corners=2pt] (-0.40,-0.40) -- (-0.40,3.40) -- (12.40,3.40) -- (12.40,2.40)  -- (14.40,2.40) -- (14.40,0.40)  -- (15.40,0.40) -- (15.40,-0.40) -- cycle;
        
    \fill[fill=Gray!30,rounded corners=2pt] (-0.40,-0.40) -- (-0.40,3.40) -- (11.40,3.40) -- (11.40,2.40)  -- (13.40,2.40) -- (13.40,1.40)  -- (14.40,1.40) -- (14.40,-0.40) -- cycle;
    \fill[fill=Gray!70,rounded corners=2pt] (-0.40,-0.40) -- (-0.40,3.40) -- (0.40,3.40) -- (0.40,1.40)  -- (1.40,1.40) -- (1.40,0.40)  -- (3.40,0.40) -- (3.40,-0.40) -- cycle;
    \draw(0,0)node{0} (0,1)node{5} (0,2)node{10} (0,3)node{15} (1,0)node{4} (1,1)node{9} (1,2)node{14} (1,3)node{19} (2,0)node{8} (2,1)node{13} (2,2)node{18} (2,3)node{23} (3,0)node{12} (3,1)node{17} (3,2)node{22} (3,3)node{27} (4,0)node{16} (4,1)node{21} (4,2)node{26} (4,3)node{31} (5,0)node{20} (5,1)node{25} (5,2)node{30} (5,3)node{35} (6,0)node{24} (6,1)node{29} (6,2)node{34} (6,3)node{39} (7,0)node{28} (7,1)node{33} (7,2)node{38} (7,3)node{43} (8,0)node{32} (8,1)node{37} (8,2)node{42} (8,3)node{47} (9,0)node{36} (9,1)node{41} (9,2)node{46} (9,3)node{51} (10,0)node{40} (10,1)node{45} (10,2)node{50} (10,3)node{55} (11,0)node{44} (11,1)node{49} (11,2)node{54} (11,3)node{59} (12,0)node{48} (12,1)node{53} (12,2)node{58} (12,3)node{63} (13,0)node{52} (13,1)node{57} (13,2)node{62} (13,3)node{67} (14,0)node{56} (14,1)node{61} (14,2)node{66} (14,3)node{71} (15,0)node{60} (15,1)node{65} (15,2)node{70} (15,3)node{75} ;
  \end{tikzpicture}
  
  \medskip
  \begin{tikzpicture}[xscale=0.7,yscale=0.5]
    \tikzset{every node/.style={inner sep=0,anchor=center,font=\scriptsize}}
    \path(-0.40,-0.40) -- (15.40,3.40); 
    \draw[draw=none,preaction={clip,postaction={fill=white, draw=Gray!25,line width=3pt, dotted}},rounded corners=2pt] (-0.40,-0.40) -- (-0.40,3.40) -- (12.40,3.40) -- (12.40,2.40)  -- (14.40,2.40) -- (14.40,0.40)  -- (15.40,0.40) -- (15.40,-0.40) -- cycle;
    \fill[fill=Gray!30,rounded corners=2pt] (-0.40,-0.40) -- (-0.40,3.40) -- (11.40,3.40) -- (11.40,2.40)  -- (13.40,2.40) -- (13.40,1.40)  -- (14.40,1.40) -- (14.40,-0.40) -- cycle;
    \fill[fill=Gray!70,rounded corners=2pt] (-0.40,-0.40) -- (-0.40,3.40) -- (0.40,3.40) -- (0.40,1.40)  -- (1.40,1.40) -- (1.40,0.40)  -- (3.40,0.40) -- (3.40,-0.40) -- cycle;
    \draw(0,0)node{64} (0,1)node{59} (0,2)node{54} (0,3)node{49} (1,0)node{60} (1,1)node{55} (1,2)node{50} (1,3)node{45} (2,0)node{56} (2,1)node{51} (2,2)node{46} (2,3)node{41} (3,0)node{52} (3,1)node{47} (3,2)node{42} (3,3)node{37} (4,0)node{48} (4,1)node{43} (4,2)node{38} (4,3)node{33} (5,0)node{44} (5,1)node{39} (5,2)node{34} (5,3)node{29} (6,0)node{40} (6,1)node{35} (6,2)node{30} (6,3)node{25} (7,0)node{36} (7,1)node{31} (7,2)node{26} (7,3)node{21} (8,0)node{32} (8,1)node{27} (8,2)node{22} (8,3)node{17} (9,0)node{28} (9,1)node{23} (9,2)node{18} (9,3)node{13} (10,0)node{24} (10,1)node{19} (10,2)node{14} (10,3)node{9} (11,0)node{20} (11,1)node{15} (11,2)node{10} (11,3)node{5} (12,0)node{16} (12,1)node{12} (12,2)node{8} (12,3)node{4} (13,0)node{12} (13,1)node{9} (13,2)node{6} (13,3)node{3} (14,0)node{8} (14,1)node{6} (14,2)node{4} (14,3)node{2} (15,0)node{4} (15,1)node{3} (15,2)node{2} (15,3)node{1} ;
  \end{tikzpicture}
  
  \medskip
  \begin{tikzpicture}[xscale=0.7,yscale=0.5]
    \tikzset{every node/.style={inner sep=0,anchor=center,font=\scriptsize}}
    \path(-0.40,-0.40) -- (15.40,3.40); 
    \draw[draw=none,preaction={clip,postaction={fill=white, draw=Gray!25,line width=3pt, dotted}},rounded corners=2pt] (-0.40,-0.40) -- (-0.40,3.40) -- (12.40,3.40) -- (12.40,2.40)  -- (14.40,2.40) -- (14.40,0.40)  -- (15.40,0.40) -- (15.40,-0.40) -- cycle;
    \fill[fill=Gray!30,rounded corners=2pt] (-0.40,-0.40) -- (-0.40,3.40) -- (11.40,3.40) -- (11.40,2.40)  -- (13.40,2.40) -- (13.40,1.40)  -- (14.40,1.40) -- (14.40,-0.40) -- cycle;
    \fill[fill=Gray!70,rounded corners=2pt] (-0.40,-0.40) -- (-0.40,3.40) -- (0.40,3.40) -- (0.40,1.40)  -- (1.40,1.40) -- (1.40,0.40)  -- (3.40,0.40) -- (3.40,-0.40) -- cycle;
    \draw(0,0)node{1} (0,1)node{2} (0,2)node{3} (0,3)node{4} (1,0)node{2} (1,1)node{4} (1,2)node{6} (1,3)node{8} (2,0)node{3} (2,1)node{6} (2,2)node{9} (2,3)node{12} (3,0)node{4} (3,1)node{8} (3,2)node{12} (3,3)node{16} (4,0)node{5} (4,1)node{10} (4,2)node{15} (4,3)node{20} (5,0)node{9} (5,1)node{14} (5,2)node{19} (5,3)node{24} (6,0)node{13} (6,1)node{18} (6,2)node{23} (6,3)node{28} (7,0)node{17} (7,1)node{22} (7,2)node{27} (7,3)node{32} (8,0)node{21} (8,1)node{26} (8,2)node{31} (8,3)node{36} (9,0)node{25} (9,1)node{30} (9,2)node{35} (9,3)node{40} (10,0)node{29} (10,1)node{34} (10,2)node{39} (10,3)node{44} (11,0)node{33} (11,1)node{38} (11,2)node{43} (11,3)node{48} (12,0)node{37} (12,1)node{42} (12,2)node{47} (12,3)node{52} (13,0)node{41} (13,1)node{46} (13,2)node{51} (13,3)node{56} (14,0)node{45} (14,1)node{50} (14,2)node{55} (14,3)node{60} (15,0)node{49} (15,1)node{54} (15,2)node{59} (15,3)node{64} ;
  \end{tikzpicture}
  \caption{Graphical representation of the inclusions $\tilde{E}(5)^\perp\subsetneq\tilde{E}(5)\subsetneq\tilde{E}(4)$ over $\mathbb{F}_{16}$. The top grid shows $H^\ast(Q)$, the middle $\sigma(H^\ast(Q))$, and the bottom $\mu(H^\ast(Q))$. The darkly shaded region represents the elements whose evaluations make up $\tilde{E}(5)^\perp$. The lightly shaded region represent those that are also contained in $\tilde{E}(5)$, and the dashed line shows the additional elements in $\tilde{E}(4)$.
      Applying Steane-enlargement to the codes $\tilde{E}(5)\subsetneq\tilde{E}(4)$ yields a $[[64, 51, 5]]_{16}$ quantum code.}
  \label{fig:inclusions}
\end{figure}

\section{Comparison with existing constructions}\label{sec:comp-with-exist}
We will now compare the Steane-enlarged quantum codes from Section~\ref{sec:steaneHermitian} to some of those already in the literature. In order to conserve space, the examples presented in this section will primarily be those where the constructions of the present paper improve upon existing constructions. This is not meant to imply that such improvements can always be expected -- the cited works also contain specific examples of quantum codes whose parameters exceed what can be obtained using the results in Section~\ref{sec:steaneHermitian}.

For the order bound improved Hermitian codes from Section~\ref{sec:steaneHermitian}, we give in Table~\ref{tab:qchermit} a number of examples where the Steane-enlargement in Proposition~\ref{prop:hermitDimension} yields better parameters than those achievable in~\cite{qcHermit}. In all of these examples, the construction of \cite{qcHermit} gives an asymmetric quantum code where $d_z-d_x=1$. By using the Steane-enlargement technique, the minimal distance $d_x$ can be increased by one, yielding a symmetric quantum code of the same dimension. Had we not applied Steane-enlargement in these cases, we would have to resort to the lower of the minimal distances when considering symmetric codes.
\begin{table}[bp]
  \centering\scriptsize
  \begin{tabular}{cc}
    \toprule
    \textbf{Code} & \textbf{Dim. increase}\\
    \midrule
    $[[\ph\ph8, \ph\ph4, \ph3]]_{4\ph}$ & $2$\\
    $[[\ph27, \ph23, \ph3]]_{9\ph}$ & $2$\\
    $[[\ph27, \ph19, \ph4]]_{9\ph}$ & $2$\\
    $[[\ph27, \ph11, \ph7]]_{9\ph}$ & $2$\\
    $[[\ph64, \ph60, \ph3]]_{16}$ & $2$\\
    $[[\ph64, \ph56, \ph4]]_{16}$ & $2$\\
    $[[\ph64, \ph51, \ph5]]_{16}$ & $3$\\
    $[[\ph64, \ph40, \ph9]]_{16}$ & $2$\\
    $[[\ph64, \ph36, 10]]_{16}$ & $2$\\
    $[[\ph64, \ph30, 13]]_{16}$ & $2$\\
    $[[125, 121, \ph3]]_{25}$ & $2$\\
    $[[125, 117, \ph4]]_{25}$ & $2$\\
    $[[125, 112, \ph5]]_{25}$ & $3$\\
    $[[125, 107, \ph6]]_{25}$ & $2$\\
    $[[125, \ph97, \ph9]]_{25}$ & $2$\\
    $[[125, \ph91, 11]]_{25}$ & $2$\\
    $[[125, \ph79, 16]]_{25}$ & $2$\\
    $[[125, \ph75, 17]]_{25}$ & $2$\\
    \bottomrule
  \end{tabular}
  \hspace{0.5cm}
  \begin{tabular}{cc}
    \toprule
    \textbf{Code} & \textbf{Dim. increase}\\
    \midrule
    $[[125, \ph67, 21]]_{25}$ & $2$\\
    $[[343, 339, \ph3]]_{49}$ & $2$\\
    $[[343, 335, \ph4]]_{49}$ & $2$\\
    $[[343, 330, \ph5]]_{49}$ & $3$\\
    $[[343, 325, \ph6]]_{49}$ & $2$\\
    $[[343, 319, \ph7]]_{49}$ & $4$\\
    $[[343, 313, \ph8]]_{49}$ & $2$\\
    $[[343, 308, \ph9]]_{49}$ & $3$\\
    $[[343, 289, 15]]_{49}$ & $2$\\
    $[[343, 284, 16]]_{49}$ & $3$\\
    $[[343, 271, 21]]_{49}$ & $2$\\
    $[[343, 267, 22]]_{49}$ & $2$\\
    $[[343, 258, 25]]_{49}$ & $3$\\
    $[[343, 251, 29]]_{49}$ & $2$\\
    $[[343, 244, 31]]_{49}$ & $3$\\
    $[[343, 235, 36]]_{49}$ & $2$\\
    $[[343, 231, 37]]_{49}$ & $2$\\
    $[[343, 219, 43]]_{49}$ & $2$\\
    \bottomrule
  \end{tabular}
  \caption{Comparison between nearly symmetrical codes obtained via the procedure in Section~5 of \cite{qcHermit} and the Steane enlarged codes from this paper. A code $[[n,k,d]]_{q^2}$ listed here offers an improvement to a corresponding $[[n,k,d/(d-1)]]_{q^2}$ code from \cite{qcHermit}.
      All of the given codes retain their original minimal distance during enlargement, and the columns marked \emph{Dim. increase} indicate the increase in dimension when applying Theorem~\ref{thm:qArySteane} rather than Corollary~\ref{cor:steaneSelfOrth}.}
  \label{tab:qchermit}
\end{table}

To exemplify the advantage of using the order bound improved codes and the Steane-enlargement technique, Table~\ref{tab:techniqueComparison} shows a number of possible quantum code parameters over $\mathbb{F}_{16}$ when using different constructions based on the Hermitian curve. As is evident, the use of the order bound gives more knowledge on the minimal distance in column two, but also provides even better parameters when applying Steane-enlargement to the order bound improved codes.

\begin{table}[tp]
  \scriptsize
  \centering
  \begin{tabular}{cccc}
    \toprule
    \multicolumn{2}{c}{\textbf{One-point codes}} & \multicolumn{2}{c}{\textbf{Order bound improved}}\\
    \cmidrule(lr){1-2}\cmidrule(lr){3-4}
    \textbf{Goppa bound} & \textbf{Order bound} & \textbf{CSS} & \textbf{Steane-enlargement}\\
    \midrule
$[[64, 30, 12]]_{16}$   & $[[64, 30, 13]]_{16}$\imprMark   & $[[64, 30, 12]]_ {16}$\phMark    & $[[64, 30, 13]]_{16}$\phMark \\
$[[64, 32, 11]]_{16}$   & $[[64, 32, 11]]_{16}$\phMark     & $[[64, 32, 12]]_{16}$\imprMark   & $[[64, 32, 11]]_{16}$\phMark \\ 
$[[64, 34, 10]]_{16}$   & $[[64, 34, 10]]_{16}$\phMark     & $[[64, 34, 10]]_{16}$\phMark     & $[[64, 34, 10]]_{16}$\phMark \\ 
$[[64, 36, \ph9]]_{16}$ & $[[64, 36, \ph9]]_{16}$\phMark   & $[[64, 36, \ph9]]_ {16}$\phMark  & $[[64, 36, 10]]_{16}$\imprMark \\
$[[64, 38, \ph8]]_{16}$ & $[[64, 38, \ph9]]_{16}$\imprMark & $[[64, 38, \ph9]]_{16}$\phMark   & $[[64, 38, \ph9]]_{16}$\phMark \\
$[[64, 39, \ph7]]_{16}$ & $[[64, 39, \ph9]]_{16}$\imprMark & $[[64, 39, \ph6]]_{16}$\phMark   & $[[64, 39, \ph9]]_{16}$\phMark \\ 
$[[64, 40, \ph7]]_{16}$ & $[[64, 40, \ph8]]_{16}$\imprMark & $[[64, 40, \ph8]]_ {16}$\phMark  & $[[64, 40, \ph9]]_{16}$\imprMark  \\
$[[64, 42, \ph6]]_{16}$ & $[[64, 42, \ph6]]_{16}$\phMark   & $[[64, 42, \ph8]]_{16}$\imprMark & $[[64, 42, \ph7]]_{16}$\phMark \\
$[[64, 44, \ph5]]_{16}$ & $[[64, 44, \ph5]]_{16}$\phMark   & $[[64, 44, \ph6]]_{16}$\imprMark & $[[64, 44, \ph7]]_{16}$\imprMark  \\
$[[64, 45, \ph4]]_{16}$ & $[[64, 45, \ph5]]_{16}$\imprMark & $[[64, 45, \ph5]]_ {16}$\phMark  & $[[64, 45, \ph6]]_{16}$\imprMark  \\
$[[64, 46, \ph4]]_{16}$ & $[[64, 46, \ph5]]_{16}$\imprMark & $[[64, 46, \ph6]]_{16}$\imprMark & $[[64, 46, \ph5]]_{16}$\phMark \\
$[[64, 48, \ph3]]_{16}$ & $[[64, 48, \ph5]]_{16}$\imprMark & $[[64, 48, \ph5]]_{16}$\phMark   & $[[64, 48, \ph5]]_{16}$\phMark \\
$[[64, 50, \ph2]]_{16}$ & $[[64, 50, \ph4]]_{16}$\imprMark & $[[64, 50, \ph4]]_{16}$\phMark   & $[[64, 50, \ph5]]_{16}$\imprMark  \\
$[[64, 51, \ph0]]_{16}$ & $[[64, 51, \ph4]]_{16}$\imprMark & $[[64, 51, \ph4]]_{16}$\phMark   & $[[64, 51, \ph5]]_{16}$\imprMark  \\
$[[64, 54, \ph0]]_{16}$ & $[[64, 54, \ph4]]_{16}$\imprMark & $[[64, 54, \ph4]]_{16}$\phMark   & $[[64, 54, \ph3]]_{16}$\phMark \\
$[[64, 56, \ph0]]_{16}$ & $[[64, 56, \ph3]]_{16}$\imprMark & $[[64, 56, \ph3]]_{16}$\phMark   & $[[64, 56, \ph4]]_{16}$\imprMark  \\
$[[64, 58, \ph3]]_{16}$ & $[[64, 58, \ph3]]_{16}$\imprMark & $[[64, 58, \ph3]]_{16}$\phMark   & $[[64, 58, \ph3]]_{16}$\phMark \\
$[[64, 60, \ph0]]_{16}$ & $[[64, 60, \ph2]]_{16}$\imprMark & $[[64, 60, \ph2]]_ {16}$\phMark  & $[[64, 60, \ph3]]_{16}$\imprMark  \\
$[[64, 62, \ph0]]_{16}$ & $[[64, 62, \ph2]]_{16}$\imprMark & $[[64, 62, \ph2]]_{16}$\phMark   & $[[64, 62, \ph2]]_{16}$\phMark \\ 
    \bottomrule
  \end{tabular}        
  \caption{Comparison between different methods for constructing quantum codes from the Hermitian curve over $\mathbb{F}_{q^2}$.
      The codes in the first two columns stem from the CSS-construction applied to the usual one-point Hermitian codes, when bounding the distance by either the Goppa bound or the order bound.
      The third and fourth columns show the possible quantum code parameters when using order bound improved codes. In the third column, only the CSS-construction is used, and in the fourth Steane-enlargement is applied. Codes marked with {\imprMark} have better parameters than all preceding codes in the same row.}
  \label{tab:techniqueComparison}
\end{table}

\section*{References}\renewcommand{\chapter}[2]{\footnotesize}

\end{document}